%\documentclass{sig-alternate}
%---comment out below when rendering to ACM style
\documentclass[11pt, letterpaper]{article}
\usepackage{fullpage}
\usepackage{times}
\usepackage{amsmath,amsfonts,amssymb,amsthm}

\usepackage{mathptmx}

\usepackage[text={6.5in,8.8in}]{geometry}

\usepackage[compact]{titlesec}
\titlespacing{\section}{0pt}{2.5ex}{1.25ex}
\titlespacing{\subsection}{0pt}{2ex}{0.8ex}
\titlespacing{\subsubsection}{0pt}{0.5ex}{0ex}

\setlength{\parskip}{0.5mm plus 0.2mm minus 0.2mm}
\setlength{\parindent}{12pt}

\usepackage[colorlinks=true,linkcolor=red,citecolor=blue,urlcolor=red]{hyperref}
\usepackage{framed}
\usepackage{verbatim}

\begin{document}

\newtheorem{fact}{Fact}[section]
\newtheorem{rules}{Rule}[section]
\newtheorem{conjecture}{Conjecture}[section]
\newtheorem{theorem}{Theorem}[section]
\newtheorem{hypothesis}{Hypothesis}
\newtheorem{remark}{Remark}
\newtheorem{proposition}{Proposition}
\newtheorem{corollary}{Corollary}[section]
\newtheorem{lemma}{Lemma}[section]
\newtheorem{claim}{Claim}
\newtheorem{definition}{Definition}[section]

\newenvironment{proofof}[1]{\smallskip
\noindent {\bf Proof of #1.  }}{\hfill$\Box$
\smallskip}

\newenvironment{reminder}[1]{\smallskip

\noindent {\bf Reminder of #1  }\em}{
}

\def \BP {{\sf BP}}
\def \coRP {{\sf coRP}}
\def \EXACT {{\sf EXACT}}
\def \SYM {{\sf SYM}}
\def \SAC {{\sf SAC}}
\def \SUBEXP {{\sf SUBEXP}}
\def \ZPSUBEXP {{\sf ZPSUBEXP}}
\def \SYMACC {{\sf SYM\text{-}ACC} }
\def \QED {{\hfill$\Box$}}
\def \PH {{\sf PH}}
\def \RP {{\sf RP}}
\def \coMA {{\sf coMA}}
\def \ACC {{\sf ACC}}
\def \SYM {{\sf SYM}}
\def \coNP {{\sf coNP}}
\def \BPP {{\sf BPP}}
\def \NC {{\sf NC}}
\def \ZPE {{\sf ZPE}}
\def \NE {{\sf NE}}
\def \E {{\sf E}}
\def \poly { \text{\rm poly} }
\def \TC {{\sf TC}}
\def \DTS {{\sf DTS}}
\def \R {{\mathbb R}}
\def \Z {{\mathbb Z}}
\def \P {{\sf P}}
\def \MA {{\sf MA}}
\def \AM {{\sf AM}}
\def \MATIME {{\sf MATIME}}
\def \QP {{\sf QP}}
\def \coNQP {{\sf coNQP}}
\def \NP {{\sf NP}}
\def \EXP {{\sf EXP}}
\def \NTISP {{\sf NTISP}}
\def \DTISP {{\sf DTISP}}
\def \TISP {{\sf TISP}}
\def \T {{\sf TIME}}
\def \TH {{\rm TH}}
\def \NEAR {{\rm NEAR}}
\def \TIME {\T}

\def \AC {{\sf AC}}
\def \BPTIME {{\sf BPTIME}}
\def \SPACE {{\sf SPACE}}
\def \RE {{\sf RE}}
\def \ZPSUBEXP {{\sf ZPSUBEXP}}
\def \coREXP {{\sf coREXP}}
\def \NQL {{\sf NQL}}
\def \QL {{\sf QL}}
\def \RTIME {{\sf RTIME}}
\def \NSUBEXP {{\sf NSUBEXP}}
\def \MAEXP {{\sf MAEXP}}
\def \PP {{\sf PP}}
\def \PSPACE {{\sf PSPACE}}
\def \NT {{\sf NTIME}}
\def \N {{\mathbb N}}
\def \NTIME {\NT}

\def \ATIME {\AT}

\def \Z {{\mathbb Z}}
\def \F {{\mathbb F}}
\def \isin {\subseteq}
\def \isnotin {\nsubseteq}
\def \eps {\varepsilon}
\def\polylog{\operatorname{polylog}}

\newcommand{\card}[1]{\ensuremath{\left|#1\right|}}
\newcommand{\ip}[2]{\ensuremath{\left<#1,#2\right>}}
\newcommand{\brac}[1]{\ensuremath{\text{\bf [}#1\text{\bf ]}}}
\newcommand{\mv}[2]{\ensuremath{\mathbf{MV}\!\left(#1,#2\right)}}
\newcommand{\hp}[1]{\text{\bf HOPCROFT}_{#1}}
\newcommand{\D}{\ensuremath{\textrm{deg}}}

\title{Probabilistic Polynomials and Hamming Nearest Neighbors\\ (Full Version)}
\author{
Josh Alman\footnote{Computer Science Department, Stanford University. Supported by NSF CCF-1212372 and NSF DGE-114747} \and
Ryan Williams\footnote{Computer Science Department, Stanford University, {\tt rrw@cs.stanford.edu}. Supported in part by a David Morgenthaler II Faculty Fellowship, and NSF CCF-1212372. Any opinions, findings and conclusions or recommendations expressed in this material are those of the authors and do not necessarily reflect the views of the National Science Foundation.} 
}

\maketitle

\begin{abstract} We show how to compute any symmetric Boolean function on $n$ variables over any field (as well as the integers) with a probabilistic polynomial of degree $O(\sqrt{n \log(1/\eps)})$ and error at most $\eps$. %That is, we give a distribution ${\cal D}$ of such polynomials such that for every $n$-bit input, a polynomial sampled from ${\cal D}$ agrees with the MAJORITY function with probability at least $1-\eps$. % A recent construction of Srinivasan gives degree $\sqrt{n} \cdot \poly(\log n, \log(1/\eps))$.)
 The degree dependence on $n$ and $\eps$ is optimal, matching a lower bound of Razborov (1987) and Smolensky (1987) for the MAJORITY function. The proof is constructive: a low-degree polynomial can be efficiently sampled from the distribution.

This polynomial construction is combined with other algebraic ideas to give the first subquadratic time algorithm for computing a (worst-case) batch of Hamming distances in superlogarithmic dimensions, \emph{exactly}. To illustrate, let $c(n) : \N \rightarrow \N$. Suppose we are given a database $D$ of $n$ vectors in $\{0,1\}^{c(n) \log n}$ and a collection of $n$ query vectors $Q$ in the same dimension. For all $u \in Q$, we wish to compute a $v \in D$ with minimum Hamming distance from $u$. We solve this problem in $n^{2-1/O(c(n) \log^2 c(n))}$ randomized time. Hence, the problem is in ``truly subquadratic'' time for $O(\log n)$ dimensions, and in subquadratic time for $d = o((\log^2 n)/(\log \log n)^2)$. We apply the algorithm to computing pairs with maximum inner product, closest pair in $\ell_1$ for vectors with bounded integer entries, and pairs with maximum Jaccard coefficients.

%We give many applications of the above probabilistic polynomial and algorithm, including to circuit lower bounds. 

\end{abstract}

\thispagestyle{empty}
\newpage
\setcounter{page}{1}

\section{Introduction}

Recall the \emph{Hamming nearest neighbor problem} (HNN): given a set $D$ of $n$ database points in the $d$-dimensional hypercube, we wish to preprocess $D$ to support queries of the form $q \in \{0,1\}^d$, where a query answer is a point $u \in D$ that differs from $q$ in a minimum number of coordinates. Minsky and Papert~(\cite{perceptrons}, Chapter 12.7) called this the ``Best Match'' problem, and it has been widely studied since. Like many situations where one wants to find points that are ``most similar'' to query points, HNN is fundamental to modern computing, especially in search and error correction~\cite{IndykSurvey}. However, known exact solutions to the problem require a data structure of $2^{\Omega(d)}$ size (storing all possible queries) or query time $\Omega(n/\poly(\log n))$ (trying nearly all the points in the database). This is one of many examples of the \emph{curse of dimensionality} phenomenon in search, with corresponding data structure lower bounds. For instance, Barkol and Rabani \cite{BarkolRabani} show a size-query tradeoff for HNN in $d$ dimensions in the cell-probe model: if one uses $s$ cells of size $b$ to store the database and probes at most $t$ cells in a query, then either $s = 2^{\Omega(d/t)}$ or $b = n^{\Omega(1)}/t$. 

During the late 90's, a new direction opened in the search for better nearest neighbor algorithms. The driving intuition was that it may be easier to find and generally good enough to have \emph{approximate} solutions: points with distance within $(1+\eps)$ of the optimum. Utilizing novel hashing and dimensionality reduction techniques, this beautiful line of work has had enormous impact~\cite{Kleinberg, IndykMotwani, KOR, Panigrahy, AndoniIndyk, GregLightbulb, AndoniBeyond, andoni2015optimal}. Still, when turning to approximations, the exponential-in-$d$ dependence generally turns into an exponential-in-$1/\eps$ dependence, leading to a ``curse of approximation''~\cite{PatrascuThesis}, with lower bounds matching this intuition \cite{CCGL, CR, AIP}. For example, Andoni, Indyk, and Patrascu~\cite{AIP} prove that any data structure for $(1+\eps)$-approximate HNN using $O(1)$ probes requires $n^{\Omega(1/\eps^2)}$ space.

In this paper, we revisit exact nearest neighbors in the Hamming metric. We study the natural off-line problem of answering $n$ Hamming nearest neighbor queries at once, on a database of size $n$. We call this the {\sc Batch Hamming Nearest Neighbor} problem (BHNN). Here the aforementioned data structure lower bounds no longer apply---there is no information bottleneck. Nevertheless, known algorithms for BHNN still run in either about $n^2 d^{\Omega(1)}$ time (try all pairs)~\cite{GumLipton, MinKerui} or about $n 2^{\Omega(d)}$ time (build a table of all possible query answers). We improve over both these bounds for $\log n \leq d \leq o(\log^2 n/\log \log n)$. Our approach builds on a recently developed framework ~\cite{WilliamsAPSP14,WilliamsFSTTCS14,AbboudWY15}. In this work, the authors show how several famous stubborn problems can yield faster algorithms, by constructing low-complexity circuits for solving simple repeated subparts of the problem. The overall strategy is to convert the simple repeated pieces into polynomials of a special form, then to evaluate the polynomials on many points fast, via an algebraic matrix multiplication.

For the problems considered in earlier work, these polynomials can be constructed using 30-year-old ideas from circuit complexity. More formally, if $f$ is a Boolean function on $n$ variables and $R$ is a ring, a \emph{probabilistic polynomial over $R$ for $f$ with error $\eps$ and degree $d$} is a distribution ${\cal D}$ of degree-$d$ polynomials over $R$ with the property that for all $x \in \{0,1\}^n$, $\Pr_{p \sim {\cal D}}[p(x) = f(x)] \geq 1-\eps$. 
Razborov~\cite{Razborov} and Smolensky~\cite{Smolensky87} showed how to construct low-degree probabilistic polynomials for every $f$ computable by a small constant-depth circuit composed of PARITY, AND, and OR gates. They also proved that probabilistic polynomials for MAJORITY with constant error require $\Omega(\sqrt{n})$ degree, concluding circuit lower bounds for MAJORITY. Earlier papers~\cite{WilliamsAPSP14,WilliamsFSTTCS14,AbboudWY15} used this low-degree construction to derive faster algorithms for problems such as dense all-pairs shortest paths, longest common substring with wildcards, and batch partial match queries. 

Developing a faster algorithm for computing Hamming nearest neighbors requires more care than prior work. In the setting of this paper, the ``repeated'' computation we need to consider is that of finding a pair of vectors among a small set which have small Hamming distance. But computing Hamming distance requires \emph{counting} bits, which means we are implicitly computing a MAJORITY of some kind. This is fundamentally harder than the constant-depth computations handled in prior work. Proceeding anyway, we prove in this paper that the Razborov-Smolensky $\sqrt{n}$ lower bound is tight up to constant factors: there is a probabilistic polynomial for MAJORITY achieving degree  $O(\sqrt{n})$ with constant error. In fact, we show that this degree can be achieved for any symmetric Boolean function. We use this to get a subquadratic time algorithm for Hamming distance computations up to about $\log^2 n$ dimensions.

\subsection{Our Results}

Recently, Srinivasan~\cite{Srinivasan13} gave a probabilistic polynomial for the MAJORITY function of degree $\sqrt{n \log(1/\eps)} \cdot \polylog (n)$ over any field. We construct a probabilistic polynomial for MAJORITY on $n$ variables with optimal dependence on $n$ and error $\eps$ over any field or the integers.

\begin{theorem} \label{MAJpoly} Let $R$ be a field, or the integers. There is a probabilistic polynomial over $R$ for MAJORITY on $n$ variables with error $\eps$ and degree $d(n,\eps)=O(\sqrt{n \log(1/\eps)})$. Furthermore, a polynomial can be sampled from the probabilistic polynomial distribution in $\tilde{O}(\sum_{i=0}^{d(n,\eps)} {n \choose i})$ time.
\end{theorem}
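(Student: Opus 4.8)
The plan is to build the polynomial by a recursive/hybrid argument on $n$, reducing the computation of MAJORITY on $n$ bits to MAJORITY on about $\sqrt{n}$ bits. The basic observation is: if $x \in \{0,1\}^n$ has Hamming weight $|x|$, then to know whether $|x| \geq n/2$ it suffices to know $|x|$ \emph{exactly} when $|x|$ is within $O(\sqrt{n \log(1/\eps)})$ of the threshold $n/2$, and otherwise a crude estimate suffices. So I would split the $n$ variables into $\sqrt{n}$ blocks of $\sqrt{n}$ variables each. For each block I sample a probabilistic polynomial that computes the block's weight \emph{exactly} with good probability but only over a small window (using that a symmetric function on $\sqrt n$ bits that only needs to distinguish $O(\sqrt n)$ consecutive values of the weight has a low-degree probabilistic polynomial — this is the recursive step, or can be bootstrapped from a base construction). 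Summing the $\sqrt{n}$ block-outputs (each a small integer, represented by $O(\log n)$ bits) gives, with high probability, the exact total weight \emph{provided} every block weight fell inside its window; a Chernoff/union bound over the $\sqrt n$ blocks controls the failure probability, and the error budget $\eps$ is split across the blocks. Finally, apply a degree-$O(\sqrt n)$ "outer" probabilistic polynomial to the $O(\sqrt n)$-bit sum to extract the MAJORITY bit.

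**Key steps, in order.**
First, establish the window lemma: for a symmetric function on $m$ bits that is constant outside an interval of $O(\sqrt{m \log(1/\delta)})$ weights, there is a probabilistic polynomial of degree $O(\sqrt{m \log(1/\delta)})$ and error $\delta$ — I expect this to follow by interpolating through the relevant weight values after a random restriction or by using the known $O(\sqrt m)$ construction on a rescaled instance. Second, set $m = \sqrt n$, partition $[n]$ into $\sqrt n$ blocks, and on each block apply the window lemma with a window of radius $\Theta(\sqrt{n \log(1/\eps)})$ around the "expected" contribution; the point is that the true weight of $x$ can only be near $n/2$ if no block is more than $\Theta(\sqrt{n\log(1/\eps)})$ above or below its share, which happens with high probability — but we must handle the adversarial (worst-case) $x$, so the windows must be placed cleverly or we use a random \emph{shift/rotation} of the partition so that every fixed $x$ looks balanced across blocks with high probability. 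Third, sum the block polynomials and feed the result into a base-case $O(\sqrt{\cdot})$-degree MAJORITY polynomial on the $O(\log n)$-bit-wide sum; since composition multiplies degrees, the total degree is $O(\sqrt{n\log(1/\eps)}) \cdot O(\text{polylog or constant})$, so care is needed to keep the outer/base pieces at \emph{constant} or $O(\sqrt{\log(1/\eps)})$ degree and avoid an extra $\polylog$ factor (this is exactly where Srinivasan's bound loses logs, and beating it is the crux). Fourth, read off the sampling time: each block polynomial has degree $d(n,\eps)$ and is a sum of at most $\sum_{i \le d} \binom{n}{i}$ monomials, and sampling reduces to sampling $\text{poly}(n)$ independent random objects, giving the claimed $\tilde O(\sum_{i=0}^{d} \binom{n}{i})$ bound.

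**Main obstacle.**
The hard part is controlling the degree blow-up from composition so that the final degree is $O(\sqrt{n\log(1/\eps)})$ with \emph{no} extra polylogarithmic factor, while simultaneously handling worst-case inputs rather than random ones. Getting the window lemma to have degree \emph{exactly} $O(\sqrt{m\log(1/\delta)})$ (matching Razborov–Smolensky, not $\sqrt m \cdot \polylog$) is the technical heart; I anticipate this needs an explicit interpolation-based construction of a probabilistic polynomial that, after a random sparsification of the inputs down to $O(\sqrt{m\log(1/\delta)})$ surviving variables, exactly computes the surviving weight — then lifts back — with the error analysis done by a sharp tail bound. Everything else (partitioning, summing, final extraction, sampling-time accounting) is comparatively routine once this core lemma is in hand.
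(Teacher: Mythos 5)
Your overall shape (split the input, compute weights exactly near the threshold via an interpolating polynomial, and handle far-from-threshold cases crudely) shares a kernel with the paper — the interpolating ``window'' polynomial is indeed one of the paper's ingredients, called $A_{n,\theta,g}$ there — but the combining strategy you propose is genuinely different, and it has a gap that is not merely a matter of filling in details.

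The paper does \emph{not} partition $[n]$ into $\sqrt n$ blocks, does not sum block weights, and does not apply an ``outer'' MAJORITY polynomial to the sum. Instead it samples a uniformly random subvector $\tilde x$ consisting of $n/10$ independently chosen coordinates of $x$ — a \emph{constant fraction}, not $\sqrt n$ of them and not $O(\sqrt{n\log(1/\eps)})$ of them — and builds the MAJORITY polynomial as a case split: $M_{n,\theta,\eps}(x) = A_{n,\theta,2a}(x)\cdot S(\tilde x) + M_{n/10,\theta,\eps/4}(\tilde x)\cdot(1-S(\tilde x))$, where $S$ is a selector (a probabilistic polynomial for the NEAR function, itself a product of two recursive threshold polynomials on $\tilde x$). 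The reason this avoids the composition blow-up you correctly flag as the crux is structural: the exact polynomial $A$ already has the target degree $O(\sqrt{n\log(1/\eps)})$ in the \emph{original} $n$ variables, while the selector and recursive term live on only $n/10$ variables, so their degree is smaller by a factor $1/\sqrt{10}$. The degree recurrence is then roughly $d(n) \le 2d(n/10) + \max\{O(\sqrt{n\log(1/\eps)}),\,d(n/10)\} = 3d(n/10)$, and since $3/\sqrt{10} < 1$ the induction closes with a single constant and \emph{no} $\polylog$ factor. There is no composition of a block-weight polynomial with an outer extractor anywhere.

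Your approach, by contrast, requires an outer step that maps the (high-probability exact) integer sum $\sum_i v_i = |x| \in [0,n]$ to the MAJORITY bit. A univariate threshold on an integer ranging over $n{+}1$ values has degree $n$; restricting to a window around $n/2$ requires already knowing you are in that window, which is circular; and representing the sum as $O(\log n)$ bits does not help because the bits themselves are high-degree functions of the block outputs. You flag this as the main obstacle but do not resolve it, and I don't see how to resolve it within the block-and-sum architecture — this is precisely where Srinivasan-style constructions pick up extra logs. Separately, your ``sparsify down to $O(\sqrt{m\log(1/\delta)})$ surviving variables and read off the weight exactly'' variant is quantitatively off: sampling $s$ of $m$ coordinates estimates $|x|$ only up to additive error $\Theta(m/\sqrt s)$, which for $s = O(\sqrt{m\log(1/\delta)})$ is $\Theta(m^{3/4}/(\log(1/\delta))^{1/4})$, far larger than the $O(\sqrt{m\log(1/\delta)})$ window you would need. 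The fix is the paper's: sample a constant fraction so the estimation error is $O(\sqrt{n\log(1/\eps)})$, matching the width of the interpolation window, and recurse on the still-$\Omega(n)$-size sample so the degrees telescope geometrically rather than multiply.
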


As mentioned above, Razborov and Smolensky's famous lower bounds for MAJORITY implies a degree lower bound of precisely $\Omega(\sqrt{n})$ in the case of constant $\eps$. For non-constant $\eps$, an asymptotically lower-degree polynomial for MAJORITY (in either $\eps$ or $n$) could be used to compute the majority of $\log(1/\epsilon)$ bits with $o(\log(1/\epsilon))$ degree and error $\eps$, which is impossible---the exact degree of MAJORITY on $n$ bits equals $n$, over any field and $\Z$. Theorem~\ref{MAJpoly} can also be applied to derive $O(\sqrt{n \log(1/\eps)})$ degree probabilistic polynomials for \emph{every} symmetric function (again improving on Srinivasan~\cite{Srinivasan13}). 

\begin{theorem} \label{SYMpoly} Let $R$ be a field, or the integers. There is a probabilistic polynomial over $R$ for any symmetric Boolean function on $n$ variables with error $\eps$ and degree $d(n,\eps)=O(\sqrt{n \log(1/\eps)})$.
\end{theorem}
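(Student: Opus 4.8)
The plan is to reduce a general symmetric function to threshold functions, i.e.\ to $\mathrm{MAJ}$, and then invoke Theorem~\ref{MAJpoly}. Write $f(x)=g\!\left(\sum_i x_i\right)$ for some $g:\{0,\dots,n\}\to\{0,1\}$, and let $K=\{k\in\{1,\dots,n\}:g(k)\ne g(k-1)\}$ be the breakpoints of $g$. Telescoping gives the identity, valid over $\Z$ and hence over every field,
\[
f(x)\;=\;g(0)\;+\;\sum_{k\in K}\sigma_k\cdot \mathrm{TH}_k(x),\qquad \sigma_k:=g(k)-g(k-1)\in\{-1,+1\},
\]
where $\mathrm{TH}_k(x)=[\,\sum_i x_i\ge k\,]$. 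Each $\mathrm{TH}_k$ is $\mathrm{MAJ}$ on at most $2n$ variables (pad with the appropriate number of constant $0$'s and $1$'s), so Theorem~\ref{MAJpoly} provides, over $R$, probabilistic polynomials $p_k$ for it of degree $O(\sqrt{n\log(1/\delta)})$ and error $\delta$. Substituting $p_k$ for $\mathrm{TH}_k$ produces the single polynomial $\widetilde f:=g(0)+\sum_{k\in K}\sigma_k p_k$, of degree $\max_k\deg p_k$ — a sum does not raise degree — which equals $f(x)$ whenever every $p_k$ is correct on $x$.

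The difficulty is the union bound over the up to $n$ breakpoints: taking $\delta=\eps/n$ yields degree $O(\sqrt{n\log(n/\eps)})$, off by a $\sqrt{\log n}$ factor for non-tiny $\eps$. To avoid this I would build, for each $k$, a probabilistic polynomial $p_k$ for $\mathrm{TH}_k$ of degree $O(\sqrt{n\log(1/\eps)})$ whose error on an input $x$ of weight $s$ is at most $\eps\cdot 2^{-\Omega(|s-k|)}$, i.e.\ it decays geometrically with the distance of $x$ from the threshold. Given such polynomials, fix $x$ of weight $s$; since the elements of $K$ are distinct integers, at most two of them lie at any fixed distance $d$ from $s$, so
\[
\Pr[\widetilde f(x)\ne f(x)]\;\le\;\sum_{k\in K}\Pr[p_k(x)\ne\mathrm{TH}_k(x)]\;\le\;\sum_{d\ge 0}2\eps\cdot 2^{-\Omega(d)}\;=\;O(\eps),
\]
uniformly in $|K|$; rescaling $\eps$ by a constant, and observing that the argument is identical over $\Z$ and every field, finishes the proof.

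The main obstacle is constructing these distance-decaying threshold polynomials inside degree $O(\sqrt{n\log(1/\eps)})$, and here I would use a multi-scale scheme. For each scale $j\ge 0$ produce, via Theorem~\ref{MAJpoly} applied to a sub-sampled and re-padded instance in which a gap of width $2^j$ around $k$ effectively shrinks the relevant number of coordinates to $\approx n/2^j$, a polynomial $B_j$ computing $\mathrm{TH}_k$ with error $\eps\cdot 2^{-\Omega(2^j)}$ on every input at distance $\ge 2^j$ from $k$; balancing parameters keeps $\deg B_j=O(\sqrt{n\log(1/\eps)})$ uniformly in $j$, because the $\sqrt{(n/2^j)\cdot 2^j}=\sqrt n$ contributions exactly cancel the growth in $j$. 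One then stitches the $B_j$ together using probabilistic indicators of the annuli $\{\,2^j\le|\sum_i x_i-k|<2^{j+1}\,\}$ — short combinations of shifted $\mathrm{MAJ}$ polynomials, again of degree $O(\sqrt{n\log(1/\eps)})$ — so that on an input at distance $d$ the combination evaluates to the $B_j$ with $2^j\approx d$. Making this stitching error-tight is the delicate part: the annulus indicators must themselves be accurate near their own boundaries, which again forces distance-decaying error and hence a careful induction on the scale. Everything else is a routine application of Theorem~\ref{MAJpoly} together with the counting bound above.
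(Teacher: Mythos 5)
You reach the same telescoping identity $f = g(0) + \sum_{k \in K} \sigma_k \mathrm{TH}_k$ as the paper, and you correctly flag that a naive union bound over the up-to-$n$ breakpoints loses a $\sqrt{\log n}$ factor in degree. The gap is in how you propose to fix this. Your plan hinges on constructing, for each $k$, a probabilistic polynomial $p_k$ for $\mathrm{TH}_k$ of degree $O(\sqrt{n\log(1/\eps)})$ whose error decays like $\eps \cdot 2^{-\Omega(|s-k|)}$ at Hamming weight $s$, built by stitching scale-$j$ polynomials $B_j$ with probabilistic annulus indicators. As you note yourself, those indicators must be accurate near their own boundaries---which is precisely the distance-decaying property you are trying to construct. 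The ``careful induction on the scale'' that would break this circularity is the entire missing content and is far from routine: on an input near the boundary of two consecutive annuli, an indicator error that routes to a \emph{coarser} $B_j$ is unsafe, and controlling the probability of such a jump without again invoking distance-decay is the whole problem. As written, the proposal is a plan rather than a proof.

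The paper sidesteps all of this with a much lighter observation. In the recursive construction of $M_{n,\theta,\eps}$ from Theorem~\ref{MAJpoly}, the only randomness is the chain of subsampled vectors $\tilde x$, one per recursion level. Reuse the same subsamplings in every $\mathrm{TH}_k$ appearing in the decomposition (with per-threshold error $\delta = \eps/2$). Then the single event ``all subsamplings stay within $a/\sqrt{n}$ of the true weight fraction at every level,'' which depends only on the samples and not on $k$, simultaneously forces correctness of \emph{every} $\mathrm{TH}_k$ polynomial. Union-bounding this event over the $O(\log n)$ recursion levels, whose error parameters shrink geometrically as $\delta/4^i$, gives total error $<\eps$ with no dependence on $|K|$ at all. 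Both proofs begin with the same decomposition; the paper's correlated-sampling step replaces your multi-scale machinery with a one-line observation, which you should adopt in place of the distance-decaying construction.
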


We use Theorem~\ref{MAJpoly} to derive several new algorithms\footnote{We stress that the polynomials of \cite{Srinivasan13} do not seem to imply the algorithms of this paper; removing the extra polylogarithmic factor is important!}. The main application is a solution to the BHNN problem mentioned earlier, where we are given $n$ query points and an $n$-point database, and wish to answer all $n$ Hamming distance queries in one shot. We show:

\begin{theorem} \label{HammingNN} Let $D \subseteq \{0,1\}^{c \log n}$ be a database of $n$ vectors, where $c$ can be a function of $n$. Any batch of $n$ Hamming nearest neighbor queries on $D$ can be answered in randomized $n^{2-1/O(c \log^2 c)}$ time, whp. \end{theorem}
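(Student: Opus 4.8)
The plan is to reduce BHNN to many small "is there a close pair?" subproblems that can each be encoded by a low-degree probabilistic polynomial (via Theorem~\ref{MAJpoly}), and then to batch-evaluate these polynomials using fast rectangular matrix multiplication, following the framework of \cite{WilliamsAPSP14,WilliamsFSTTCS14,AbboudWY15}. Concretely, first I would handle the decision version: given a target distance $t$, partition the $n$ queries into $n/s$ blocks of size $s$ and the $n$ database points into $n/s$ blocks of size $s$, for a parameter $s$ to be chosen. For a fixed pair of blocks $(A,B)$, I want a single Boolean function $g$ on the $O(s^2 \cdot c\log n)$ input bits (the bits of all vectors in $A$ and $B$) that outputs $1$ iff some $u\in A$ and $v\in B$ have Hamming distance $\le t$. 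The key is that $g$ is an OR over the $s^2$ pairs of a threshold (a symmetric function) applied to $c\log n$ XOR's of pairs of input bits; since each XOR is exactly degree $1$ over $\mathbb{F}_2$ (or can be arithmetized cheaply over $\mathbb{Z}$), Theorem~\ref{MAJpoly}/\ref{SYMpoly} gives a probabilistic polynomial for the inner threshold of degree $O(\sqrt{c\log n \cdot \log(1/\eps)})$, and a probabilistic OR of $s^2$ of these (Razborov--Smolensky style, or via the standard OR-amplification) of degree $O(\log(s^2/\eps))$ times that, pushed to error $1/\mathrm{poly}(n)$ so a union bound over all $(n/s)^2$ block pairs succeeds.

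Second, I would expand this composed polynomial into a multilinear polynomial over its $N := O(s^2 c\log n)$ Boolean inputs; the number of monomials is at most $\sum_{i\le D}\binom{N}{i}$ where $D$ is the total degree, roughly $D = O(\sqrt{c\log n\cdot \log n}\cdot \log s) = O(\sqrt{c}\,\log n\,\log s)$. To batch-evaluate, I build two matrices: one whose rows are indexed by query-blocks $A$ and whose columns are indexed by monomials, with entry equal to the product of those $A$-variables appearing in the monomial; similarly a matrix indexed by database-blocks $B$ and monomials. Their product, an $(n/s)\times(n/s)$ matrix, has in entry $(A,B)$ the evaluation of the polynomial, i.e. (whp) the indicator of a close pair in $A\times B$. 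Using the rectangular matrix multiplication bound (Coppersmith, or the Le~Gall bounds) this costs roughly $(n/s)^2 \cdot \mathrm{poly}$ as long as the number of monomials is at most $(n/s)^{0.1}$ or so; choosing $s = n^{\delta}$ for a small constant $\delta = \Theta(1/(c\log^2 c))$ makes the monomial count $n^{O(\delta \sqrt c \log c)} = n^{o(1)}$ relative to $(n/s)$, and balances the two terms to give total time $n^{2-1/O(c\log^2 c)}$. The $\log^2 c$ rather than $\log c$ arises because $D$ already carries one $\log s \sim \delta \log n$ factor and the constraint "$N^{D} \ll (n/s)$" forces $\delta$ down by another logarithmic factor in $c$.

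Third, I would remove the dependence on the target $t$ and actually report a nearest neighbor, not just decide. For the decision-to-search reduction, since Hamming distances lie in $\{0,1,\dots,c\log n\}$, I can binary-search over $t$ (an $O(\log(c\log n))$ factor, absorbed), or better, run all $t$ simultaneously by working with a polynomial that outputs the minimum distance encoded in unary/bits; the simplest route is: for each query $u$, binary search for the smallest $t$ with a block-pair-witness, then within the $O(s)$ candidate database points in witnessing blocks recompute distances by brute force in $\mathrm{poly}(s, c\log n)$ time, charged $n\cdot \mathrm{poly}$ overall. One subtlety is that the matrix-multiplication step only certifies \emph{which block pairs} contain a close pair; recovering the actual witness within a flagged block pair is done by the same brute-force recompute, so the block size $s$ must stay polynomially bounded, which it does.

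The main obstacle I expect is the degree accounting: composing a symmetric-function probabilistic polynomial of degree $\sqrt{\cdot}$ with an OR over $s^2$ branches, and simultaneously driving the error to $1/\mathrm{poly}(n)$ for the union bound, could blow the degree up by extra $\log$ factors and destroy the $n^{2-1/\mathrm{poly}(c)}$ bound unless the composition is done carefully (e.g.\ using that the OR need only be \emph{one-sided}-error amplified, and that Theorem~\ref{MAJpoly}'s degree is $O(\sqrt{n\log(1/\eps)})$ with the $\log(1/\eps)$ \emph{inside} the square root, so taking $\eps = 1/\mathrm{poly}(n)$ costs only a $\sqrt{\log n}$ factor). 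Getting all of these logarithmic factors to land as $\log^2 c$ and not, say, $\log^3 c$ or an extra $\log n$, is the delicate part; everything else is the now-standard polynomial-method-meets-matrix-multiplication bookkeeping.
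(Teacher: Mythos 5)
The high-level plan you sketch (block the queries and database into groups of size $s$, build a probabilistic polynomial for the ``is there a close pair in $A \times B$'' predicate via MAJORITY-of-XORs composed under an OR, batch-evaluate the monomial-expanded polynomial via rectangular matrix multiplication, then binary-search and brute-force to recover witnesses) does match the paper's framework. But the quantitative core of the proof lives in exactly the two places your accounting goes wrong, and as written the bound does not close.

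First, you propose to drive the per-instance polynomial error down to $1/\mathrm{poly}(n)$ \emph{inside the degree}, both in the inner threshold (so that $\log(1/\eps)$ becomes $\log n$) and in the OR amplification (an extra $\log(s^2/\eps) = \Theta(\log n)$ multiplicative factor). The paper never does either. It keeps the outer polynomial at error $1/3$ and amplifies by sampling $O(\log n)$ independent polynomials, evaluating each, and taking a coordinate-wise majority of the results --- this costs an $O(\log n)$ factor in running time, not in degree. For the inner threshold it only needs $\eps = 1/s^3$ so that all $2s^2$ instances of $p$ are simultaneously correct with probability $1 - O(1/s)$, giving $\log(1/\eps) = O(\log s)$, not $\log n$; and the OR over $s^2$ pairs is the $\F_2$ Razborov--Smolensky trick $q = 1 + \prod_{k=1}^2\bigl(1 + \sum_{(i,j)\in R_k}(1+p)\bigr)$, which multiplies degree by only a constant, not $\log(\cdot)$. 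Your degree $D = O(\sqrt{c}\,\log n\,\log s)$ should instead be $D = O(\sqrt{d\log s}) = O(\sqrt{c\,\delta}\,\log n)$ with $s = n^\delta$, which is smaller by a $\Theta(\log n/\log c)$ factor for $\delta = \Theta(1/(c\log^2 c))$.

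Second, and more fatally, your monomial count $\sum_{i\le D}\binom{N}{i}$ over all $N = O(sd)$ input variables ignores the product structure of the OR and is exponentially too large. After expanding the Razborov--Smolensky product, $q$ is a sum of at most $O(s^4)$ terms, each a product of two copies of $p$, and each such copy depends only on the $2d$ coordinates of a single $(x_i,y_j)$ pair. Hence the number of distinct monomials is $O\bigl(s^4\binom{O(d)}{O(\deg p)}\bigr)$ --- polynomial in $s$ but with the binomial taken only in $d$, not $sd$. This is what makes $s = n^{\Theta(1/(c\log^2 c))}$ viable while keeping the monomial count below $n^{0.17}$. With your bound $\binom{2sd}{D}$, the $s$ inside the binomial forces the count to be $n^{\Omega(\delta D)}$, which is super-polynomial in the regime you need and cannot be repaired by tuning $\delta$ within the target form $1/(c\log^2 c)$. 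Fixing the proof requires making the monomial count sensitive to the fact that the inner probabilistic polynomial is reused on many disjoint $2d$-variable blocks, and keeping the error at the right places ($1/s^3$ inside, $1/3$ outside, amplified by repetition rather than degree).
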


For instance, if $d = O(\log n)$, then the algorithm runs in \emph{truly subquadratic} time: $n^{2-\eps}$, for some  $\eps > 0$. To our knowledge, this is the first known improvement over $n^2$ time for the case where $d \geq \log n$. In general, our algorithm improves over $n^2$ for dimensions up to $o(\log^2 n/(\log \log n)^2)$.\footnote{The logarithmic decrease in degree compared to previous results in Theorem \ref{MAJpoly} is crucial for achieving this truly subquadratic runtime: the resulting decrease in the number of monomials in Theorem \ref{Hammingprobpoly} will be necessary to get the runtime in Theorem \ref{BichromaticHD} of our algorithm's analysis.}

Theorem~\ref{HammingNN} follows from a similar running time for {\sc Bichromatic Hamming Closest Pair}: given  $k$ and a collection of ``red'' and ``blue'' Boolean vectors, determine if there is a red and blue vector with Hamming distance at most $k$. Such bichromatic problems are central to algorithms over metric spaces.

The versatility of the Hamming metric makes Theorem~\ref{HammingNN} highly applicable. For example, we can also solve closest pair in $\ell_1$ norm with bounded integer entries, as well as {\sc Bichromatic Min Inner Product}: given an integer $k$ and a collection of red and blue Boolean vectors, determine if there is a red and blue vector with inner product at most $k$. We show that these problems are in $n^{2-1/O(c \log^2 c)}$ randomized time, by simple reductions (Theorem~\ref{L1NN} and Theorem~\ref{BichromaticMIP}). As a consequence, closest pair problems in other measures, such as the Jaccard distance, can also be solved in subquadratic time.

%\item {\sc Polynomial Threshold Function Evaluation:} Given a polynomial $p(x_1,\ldots,x_k)$ with $n^{2-\eps}$ monomials and an arbitrary collection of $n^2$ inputs $y_1,\ldots,y_{n^2} \in \{-1,1\}^k$, we wish to compute $\text{sign}(p(y_i))$ for all $i=1,\ldots,n^2$. We show how to solve this batch evaluation problem in time (Theorem~\ref{PTFeval}). 

%\item {\sc Circuit Lower Bounds:} We also solve the satisfiability problem for $\ACC \circ \SYM \circ \SYM$ circuits with fan-in $O(n^{2-\eps})$, in much less than $2^n$ time; this implies a new circuit lower bound for functions in $\NE$ (Theorem~\ref{ACC-SAT}).
%\end{itemize}

It is important to keep in mind that sufficiently fast off-line Hamming closest pair algorithms would yield a breakthrough in satisfiability algorithms, so there is a potential limit:

\begin{theorem} \label{SETH} Suppose there is $\eps > 0$ such that for all constant $c$, {\sc Bichromatic Hamming Closest Pair} can be solved in $2^{o(d)} \cdot n^{2-\eps}$ time on a set of $n$ points in $\{0,1\}^{c \log n}$. Then the Strong Exponential Time Hypothesis is false.
\end{theorem}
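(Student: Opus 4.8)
The plan is to establish the contrapositive via a fine-grained reduction from CNF-SAT to {\sc Bichromatic Hamming Closest Pair}. Recall that the Strong Exponential Time Hypothesis (SETH) asserts that for every $\delta > 0$ there is a clause width $k$ such that $k$-SAT on $m$ variables cannot be solved in $2^{(1-\delta)m}$ time; it is a standard fact (following the sparsification lemma of Impagliazzo, Paturi, and Zane) that SETH is equivalent to the statement that CNF-SAT on $m$ variables and $O(m)$ clauses requires $2^{m-o(m)}$ time. So it suffices to show that a fast algorithm for {\sc Bichromatic Hamming Closest Pair} in $c \log n$ dimensions, as hypothesized, yields a $2^{(1-\Omega(1))m}$ time algorithm for CNF-SAT with a linear number of clauses.

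The reduction is the by-now classical ``split and list'' construction. Given a CNF formula $\phi$ on $m$ variables with $M = O(m)$ clauses, split the variables into two halves of size $m/2$. Enumerate all $N := 2^{m/2}$ partial assignments to the first half; for each, build a red vector in $\{0,1\}^M$ whose $i$-th coordinate records whether that partial assignment already satisfies clause $C_i$. Symmetrically enumerate all $N$ partial assignments to the second half, producing blue vectors with the analogous ``clause satisfied'' indicators. One checks that a pair of partial assignments jointly satisfies $\phi$ if and only if the corresponding red and blue vectors together cover all $M$ coordinates, i.e. have no common zero; by a short arithmetic manipulation (complementing one side, or appending the all-ones-minus-OR pattern), this ``joint coverage'' condition is equivalent to the pair of vectors having Hamming distance at least $M$ — or, after a standard flip of one color's coordinates, Hamming distance at most some explicit threshold $k$. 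Thus $\phi$ is satisfiable iff the resulting bichromatic instance on $2N$ points in dimension $d = M = O(m) = O(\log N)$ has a red-blue pair at Hamming distance $\le k$.

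Now set $n := 2N = 2^{m/2+1}$, so that $d = M = O(m) = O(\log n)$; choosing the constant $c$ to absorb the $O(\cdot)$, the hypothesized algorithm runs in $2^{o(d)} \cdot n^{2-\eps} = 2^{o(m)} \cdot 2^{(m/2+1)(2-\eps)} = 2^{(1-\eps/2)m + o(m)}$ time, which for the linear-clause version of CNF-SAT refutes SETH. The only mild subtlety is that the number of clauses $M$ may exceed $m$ by a constant factor, making $d = c\log n$ with $c$ a fixed constant rather than $c = 1$; this is exactly why the theorem quantifies over ``all constant $c$'' and why the $2^{o(d)}$ slack is included — it lets the $2^{o(d)}$ factor stay $2^{o(m)}$ regardless of the constant hidden in $M = O(m)$. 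I expect the main point requiring care to be verifying that the coverage-versus-Hamming-distance translation is exact (so that this reduces to the \emph{exact} closest pair problem, not an approximate one) and that the parameters line up so the final exponent is genuinely bounded away from $1$; the combinatorics of the split-and-list step itself is routine.
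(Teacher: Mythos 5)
Your high-level route is essentially the paper's: split-and-list CNF-SAT into an orthogonality-style condition on $2^{m/2}$ vectors, then turn that condition into a Hamming-distance threshold. (The paper simply cites the hardness of \textsc{Orthogonal Vectors} and only does the OV-to-Hamming step; you inline the SAT-to-OV half as well, which is fine.) The gap is in the step you yourself flag as "requiring care", and the claim you make there is false as stated. You assert that "joint coverage" (no index where both vectors are $0$) is, after complementing one side or appending some pattern, equivalent to Hamming distance at least $M$, or at most some fixed $k$ after a flip. It is not a fixed threshold. For $u,v \in \{0,1\}^M$, the number of common zeros is $M - |u| - |v| + \langle u,v\rangle$, so joint coverage means $\langle u,v\rangle = |u|+|v|-M$, which corresponds to $H(u,v) = |u|+|v| - 2\langle u,v\rangle = 2M - |u| - |v|$ --- a threshold that varies with the Hamming weights of $u$ and $v$. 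Concretely, with $M=2$, $u=(1,1)$, $v=(1,0)$: joint coverage holds but $H(u,v)=1 < M$. Complementing one side only turns coverage into a domination condition ($\bar v \le u$), still not a Hamming threshold. So the reduction as written does not produce an instance of (exact) \textsc{Bichromatic Hamming Closest Pair} with a single cut-off $k$.

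The paper fixes this by binning on Hamming weight: for each of the $d^2$ pairs $(I,J)$ it restricts to red vectors of weight $I$ and blue of weight $J$, so that within a bin $H(u,v) = I + J - 2\langle u,v\rangle$ is an affine function of the inner product; minimizing inner product is then maximizing Hamming distance, and furthest pair is swapped to closest pair by flipping one color's bits. An alternative fix --- possibly what your "all-ones-minus-OR pattern" is gesturing at --- is to pad each red $u$ to $(u,\,1^{d-|u|}0^{|u|},\,0^d)$ and each blue $v$ to $(v,\,0^d,\,1^{d-|v|}0^{|v|})$ in $\{0,1\}^{3d}$, which equalizes all weights to $d$ and preserves inner products, giving $H(u',v') = 2d - 2\langle u,v\rangle$ and hence one genuine threshold. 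Either fix must be stated explicitly; without it the equivalence you rely on is simply wrong. Your parameter bookkeeping ($d = O(\log n)$, final time $2^{o(d)}n^{2-\eps} = 2^{(1-\eps/2)m + o(m)}$, quantifying over all constants $c$) is correct and matches the intent of the theorem.
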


The proof is actually a reduction from the (harder-looking) {\sc Orthogonal Vectors} problem, where it is well-known that $n^{2-\eps}$ time would refute SETH~\cite{Williams05}. For completeness, the proof is in Section~\ref{appendix-SETH}.

\subsection{Other Related Work}

The ``planted'' case of Hamming distance has been studied extensively in learning theory and cryptography. In this setting, all vectors are chosen uniformly at random, except for a planted pair of vectors with Hamming distance much smaller than the expected distance between two random vectors. Two recent references are notable: G. Valiant~\cite{GregLightbulb} gave a breakthrough $O(n^{1.62})$ time algorithm, which is \emph{independent} of the vector dimension and the Hamming distance of the planted pair. Valiant also gives a $(1+\eps)$-approximation to the closest pair problem in Hamming distance running in $n^{2-\Omega(\sqrt{\eps})}$ time. See \cite{EUROCRYPTNN} for very recent work on batch Hamming distance computations in cryptoanalysis.

Gum and Lipton \cite{GumLipton} observe that $n^2$ Hamming distances can be computed in $O(n^2 d^{0.4})$ time via a direct application of fast matrix multiplication. An extension to arbitrary alphabets was obtained by \cite{MinKerui}. For our situation of interest ($d \ll n$) this is only a minor improvement over the $O(n^2 d)$ cost of the obvious algorithm.

\section{Preliminaries}

We assume basic familiarity with algorithms, complexity theory, and properties of polynomials. It is worth noting that for a weaker notion of approximation, it is not hard to construct low-degree polynomials that \emph{correlate} well with MAJORITY, and in fact any symmetric function. In particular, for every symmetric function and $\eps > 0$ there is a single degree-$O(\sqrt{n})$ polynomial that agrees with the function on at least $1-\eps$ of the points in $\{0,1\}^n$: take a polynomial that outputs the symmetric function's value on the inputs of Hamming weight $[n/2-\Omega(\sqrt{n}),n/2+O(\sqrt{n})]$. A constant fraction of the $n$-bit inputs are in this interval, and polynomial interpolation yields an $O(\sqrt{n})$-degree polynomial. (See Lemma~\ref{Apoly}.) Our situation is more difficult: we want \emph{all} inputs to have a high chance of agreement with our symmetric function, when we sample a polynomial.

We need one lemma from prior work on efficiently evaluating polynomials over a combinatorial rectangle of inputs. The lemma was proved and used in earlier work~\cite{WilliamsAPSP14, AbboudWY15} to design randomized algorithms for many problems. %We include a sketch of it here, for completeness.

\begin{lemma}[\cite{WilliamsAPSP14}] \label{polyeval}
Given a polynomial $P(x_1,\ldots,x_d,y_1,\ldots,y_d)$ over a (fixed) finite field with at most $n^{0.17}$ monomials, and two sets of $n$ inputs $A = \{a_1,\ldots,a_n\} \subseteq \{0,1\}^d$, $B = \{b_1,\ldots,b_n\} \subseteq \{0,1\}^d$, we can evaluate $P$ on all pairs $(a_i,b_j) \in A \times B$ in $\tilde{O}(n^2 + d \cdot n^{1.17})$ time. 
\end{lemma}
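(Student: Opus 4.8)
The plan is to reduce the whole computation to a single rectangular matrix multiplication. First I would put $P$ into bilinear form. Since every coordinate of every $a_i$ and $b_j$ lies in $\{0,1\}$, we have $x_i^k = x_i$ and $y_j^k = y_j$ for all $k \ge 1$, so on the domain of interest each monomial of $P$ collapses to a product $\left(\prod_{i \in S} x_i\right)\left(\prod_{j \in T} y_j\right)$ for some $S, T \subseteq [d]$, scaled by a field coefficient. Collecting terms, we may therefore write $P(x,y) = \sum_{\ell=1}^{m} c_\ell \cdot u_\ell(x)\, w_\ell(y)$ where $m \le n^{0.17}$, each $c_\ell$ lies in the (fixed) finite field, and each $u_\ell$ (resp. $w_\ell$) is a multilinear monomial in the $x$-variables (resp. $y$-variables).

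Next I would build two matrices from this representation: let $U$ be the $n \times m$ matrix with $U[i,\ell] = c_\ell \cdot u_\ell(a_i)$, and let $W$ be the $m \times n$ matrix with $W[\ell,j] = w_\ell(b_j)$. Each entry is the evaluation of a single monomial on a Boolean vector, computable with $O(d)$ field operations, so forming $U$ and $W$ takes $O(m \cdot n \cdot d) = O(d \cdot n^{1.17})$ time. By construction,
\[
(U W)[i,j] = \sum_{\ell=1}^{m} c_\ell\, u_\ell(a_i)\, w_\ell(b_j) = P(a_i,b_j),
\]
so all $n^2$ desired evaluations are exactly the entries of the product $UW$, and it only remains to multiply an $n \times m$ matrix by an $m \times n$ matrix over the fixed finite field.

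The key step is Coppersmith's rectangular matrix multiplication: there is a constant $\alpha > 0.17$ such that the product of an $N \times N^{\alpha}$ matrix with an $N^{\alpha} \times N$ matrix can be computed using $O(N^2 \log^2 N)$ arithmetic operations. Since $m \le n^{0.17} < n^{\alpha}$, applying this with $N = n$ (padding the short dimension up to $n^{\alpha}$ with zeros) computes $UW$ with $\tilde{O}(n^2)$ field operations; as the field is fixed, each operation costs $O(1)$ (e.g.\ by table lookup), so this step runs in $\tilde{O}(n^2)$ time. Combining the $O(d \cdot n^{1.17})$ preprocessing cost, the $\tilde{O}(n^2)$ cost of the matrix product, and the $O(n^2)$ cost of reading off the output gives the claimed $\tilde{O}(n^2 + d \cdot n^{1.17})$ bound.

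The main obstacle — indeed essentially the only nontrivial ingredient — is that the reduction has to be tight enough to stay below Coppersmith's threshold: the exponent $0.17$ in the hypothesis is not cosmetic but is exactly what keeps the inner dimension $m$ small enough that the rectangular product still runs in $n^{2+o(1)}$ time. The remaining pieces (the Boolean collapse of powers, the construction of $U$ and $W$, and the accounting of field-operation costs over a fixed field) are routine.
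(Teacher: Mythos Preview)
Your proposal is correct and matches what the paper indicates. Note that the paper does not actually prove this lemma: it is quoted from \cite{WilliamsAPSP14}, and the only hint given is the sentence ``At the heart of Lemma~\ref{polyeval} is a rectangular (but not necessarily impractical!) matrix multiplication algorithm.'' Your argument --- multilinearize on the Boolean domain, split each monomial into an $x$-part and a $y$-part, form the $n \times m$ and $m \times n$ matrices, and invoke Coppersmith's $N \times N^{\alpha}$ by $N^{\alpha} \times N$ product in $N^{2+o(1)}$ operations with $\alpha > 0.17$ --- is exactly the argument in the cited reference, and your accounting of the $O(d \cdot n^{1.17})$ preprocessing and $\tilde{O}(n^2)$ multiplication costs is accurate.
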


At the heart of Lemma~\ref{polyeval} is a rectangular (but not necessarily impractical!) matrix multiplication algorithm. For more details, see the references.

\subsection{Notation}
In what follows, for $(x_1,\ldots,x_n) \in \{0,1\}^n$ define $|x| := \sum_{i=1}^n x_i$. For a logical predicate $P$, we use the notation $\brac{P}$ to denote the function which outputs $1$ when $P$ is true, and $0$ when $P$ is false.

For $\theta \in [0,1]$, define $\TH_\theta : \{ 0,1 \}^n \to \{ 0, 1 \}$ to be the \emph{threshold function} $\TH_\theta(x_1,\ldots,x_n) := \brac{|x|/n \geq \theta}$. In particular, $\TH_{1/2} = {\rm MAJORITY}$. We also define $\NEAR_{\theta, \delta} : \{ 0,1\}^n \to \{0,1\}$, such that $\NEAR_{\theta, \delta}(x) := \brac{|x|/n \in [\theta - \delta, \theta + \delta]}$. Intuitively, $\NEAR_{\theta, \delta}$ checks whether $|x|/n$ is ``near'' $\theta$, with error $\delta$. 

\section{Probabilistic Polynomial for MAJORITY: Proof of Theorem~\ref{MAJpoly}} 

In this section, we prove Theorem~\ref{MAJpoly}. To do so, we construct a probabilistic polynomial for $\TH_\theta$ over $\Z[x_1, \ldots, x_n]$ which has degree $O(\sqrt{n \log(1/\epsilon)})$ and on each input is correct with probability at least $1 - \epsilon$. 

\paragraph{Intuition for the construction.} First, let us suppose $|x|/n$ is not too close to $\theta$: in particular $|x|/n$ is not within $\delta = O(\sqrt{\log(1/\epsilon) / n})$ of $\theta$. Then, if we construct a new smaller vector $\tilde{x}$ by sampling $1/10$ of the entries of $x$, it is likely that $|\tilde{x}|/(n/10)$ lies on the same side of $\theta$ as $|x|/n$. This suggests a \emph{recursive} strategy: we can use our polynomial construction on the sample $\tilde{x}$. Second, if $|x|/n$ is close to $\theta$, then by interpolating, we can use an exact polynomial of degree $O(\sqrt{n \log(1/\epsilon)})$ (which we call $A_{n,\theta,g}$) that is guaranteed to give the correct answer. To decide which of the two cases we are in, we will use a probabilistic polynomial for $\NEAR$ (on a smaller number of variables), which can itself be written as the product of two probabilistic polynomials for $\TH$. The degree incurred by recursive calls can be adjusted to have tiny overhead, with the right parameters.

In comparison, Srinivasan~\cite{Srinivasan13} takes a number theoretic approach. For $\Omega(\log n)$ different primes $p$, his polynomial uses $p-1$ probabilistic polynomials in order to determine the Hamming weight of the input$\pmod{p}$. Then, it uses an exact polynomial inspired by the Chinese Remainder Theorem to determine the true Hamming weight of the input, and whether it is at least $n/2$. This approach works on a more general class of functions than ours, called $W$-sum determined, which are determined by a weighted sum of the input coordinates. However, the number of primes being considered inherently means that this type of approach will incur extra logarithmic degree increases. In fact, we also give a better probabilistic degree for every symmetric function.

\paragraph{Interpolating Polynomial} Let $A_{n, \theta, g} : \{0,1\}^n \to \Z$ be an exact polynomial of degree at most $2 g \sqrt{n} + 1$ which gives the correct answer to $\TH_\theta$ for any vector $x$ with $|x| \in [\theta n - g \sqrt{n}, \theta n + g \sqrt{n}]$, and can give arbitrary answers to other vectors. Such a polynomial $A_{n, \theta, g}$ can be derived from prior work (at least over fields~\cite{Srinivasan13}), but for completeness, we nonetheless prove its existence.\footnote{It is not immediately obvious from univariate polynomial interpolation that $A_{n, \theta, g}$ exists as described, since the univariate polynomial $p$ such that $A_{n, \theta, g}(x) = p(|x|)$ typically has rational (non-integer) coefficients.} 

%\footnote{Theorem 2.8 in \url{http://www.sciencedirect.com/science/article/pii/S0022000005000917}}.

%\begin{lemma} \label{univar}
%There is no univariate polynomial $p \in \Z[x]$ with integer coefficients such that $p(-1) = p(0) = 0$ and $p(1) = 1$.
%\end{lemma}

%\begin{proof}
%Consider any polynomial $p$ with $p(-1) = p(0) = 0$ and $p(1) = 1$. Let $q(x) = p(x) - x(x+1)/2$. Hence, $-1, 0, 1$ are all roots of $q$, so we can write $p$ as:

%$$p(x) = \frac{x(x+1)}{2} + q(x) = \frac{x(x+1)}{2} + (x-1)x(x+1) r(x) = \frac{x^2}{2} + \frac{x}{2} + (x^3 - x) r(x)$$

%for some polynomial $r$. Notice that the sum of the coefficients of odd powers of $x$ in $(x^3 - x) r(x)$ is zero. Hence, the sum of the coefficients of odd powers of $x$ on the right side of the equation above is $1/2$. This means that sum must be $1/2$ in $p(x)$, meaning $p$ does not have integer coefficients.
%\end{proof}

\begin{lemma} \label{Apoly}
For any integers $n, r, k$ with $n \geq k+r$ and any integers $c_1, \ldots, c_r$, there is a multivariate polynomial $p : \{0,1\}^n \to \Z$ of degree $r-1$ with integer coefficients such that $p(x) = c_i$ for all $x \in \{ 0,1\}^n$ with Hamming weight $|x| = k + i$.
\end{lemma}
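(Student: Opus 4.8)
We want: given integers $n, r, k$ with $n \geq k+r$, and arbitrary integer targets $c_1, \dots, c_r$, a multivariate polynomial $p : \{0,1\}^n \to \Z$ of degree $r-1$ with *integer* coefficients, such that $p(x) = c_i$ whenever $|x| = k+i$.

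Key idea: use the elementary symmetric polynomials $e_j(x) = \sum_{|S| = j} \prod_{i \in S} x_i$. On an input $x$ of Hamming weight $w$, $e_j(x) = \binom{w}{j}$. So $e_j$ "reads off" the binomial coefficient $\binom{|x|}{j}$. The natural basis for integer-valued univariate polynomials is the binomial basis $\binom{t}{j}$, $j = 0, 1, \dots$ — and crucially, $\binom{t}{j}$ as a polynomial in $t$ has *rational* coefficients, but the sequence of *values* $\binom{w}{j}$ for integer $w$ is always an integer, and any $\Z$-valued function on $\{k+1, \dots, k+r\}$ can be written as a $\Z$-linear combination of $\binom{t-k-1}{0}, \binom{t-k-1}{1}, \dots, \binom{t-k-1}{r-1}$.

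So here is the approach. First I would handle the univariate side: show that for the $r$ points $t \in \{k+1, \dots, k+r\}$ and target values $c_1, \dots, c_r$, there exist integers $a_0, \dots, a_{r-1}$ such that $q(t) := \sum_{j=0}^{r-1} a_j \binom{t - k - 1}{j}$ satisfies $q(k+i) = c_i$ for $i = 1, \dots, r$. This follows by forward differences / a triangular linear system: $a_0 = c_1$, and inductively $a_j = c_{j+1} - \sum_{l < j} a_l \binom{j}{l}$, all integers since the $\binom{j}{l}$ are integers. The shift by $k+1$ keeps the arguments in $\{0, \dots, r-1\}$ on the relevant points so the triangular system stays integral. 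Then I would "lift" $q$ to a multivariate polynomial: replace $\binom{t - k - 1}{j}$ by an honest integer-coefficient multivariate polynomial in $x$ that agrees with $\binom{|x| - k - 1}{j}$ on all of $\{0,1\}^n$. Since $\binom{|x| - k - 1}{j}$ is a fixed integer-coefficient $\Z$-linear combination of $\binom{|x|}{0}, \dots, \binom{|x|}{j}$ (again by a Vandermonde/forward-difference identity for shifting the top argument of a binomial coefficient, with *integer* coefficients), and $\binom{|x|}{l} = e_l(x_1, \dots, x_n)$ identically on the cube, we get $p(x) := \sum_{j=0}^{r-1} a_j \cdot \big(\text{integer combination of } e_0(x), \dots, e_j(x)\big)$. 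Each $e_l$ has degree $l \leq r - 1$ and integer (indeed $0/1$) coefficients, so $p$ has degree $r-1$ and integer coefficients, and $p(x) = q(|x|) = c_i$ when $|x| = k+i$. The hypothesis $n \geq k+r$ is only needed so that the prescribed Hamming weights $k+1, \dots, k+r$ are actually attainable in $\{0,1\}^n$ (and $e_j$ for $j \leq r-1 \leq n$ is a genuine, nonzero polynomial).

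The main obstacle — and the whole point of the footnote in the paper — is the integrality: naive Lagrange interpolation through $r$ points gives a degree-$(r-1)$ univariate polynomial with rational coefficients, and substituting $|x| = \sum x_i$ would give a multivariate polynomial with rational coefficients. The fix is entirely in choosing the right basis (binomial coefficients / elementary symmetric polynomials) in which integrality is manifest; once the basis is chosen, every step is a routine triangular-system or forward-difference computation with integer entries. I would make sure to state cleanly the two integer identities used: (i) the forward-difference expansion of an arbitrary integer sequence on $r$ consecutive points in the shifted binomial basis, and (ii) the identity $\binom{m - s}{j} = \sum_{l} (-1)^{?}\binom{s + l - 1}{l}\binom{m}{j - l}$-type relation expressing $\binom{|x| - k - 1}{j}$ over $\Z$ in terms of the $e_l(x)$ — or, more simply, just observe directly that the map "shift the argument by a constant integer" preserves the lattice $\Z\{\binom{t}{0}, \dots, \binom{t}{j}\}$ of integer-valued polynomials of degree $\leq j$, which is immediate from the Pascal recurrence $\binom{t-1}{j} = \binom{t}{j} - \binom{t-1}{j-1}$ applied inductively.
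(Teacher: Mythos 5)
Your proof is correct, and its core idea coincides with the paper's: write $p$ in the basis of elementary symmetric polynomials $e_j(x)=\sum_{|S|=j}\prod_{i\in S}x_i$, observe that $e_j(x)=\binom{|x|}{j}$ on the cube, and reduce to an integer linear-algebra statement about binomial coefficients. Where you diverge from the paper is in \emph{how} you establish the integrality of the coefficients. The paper sets up the full $r\times r$ system $Ma=c$ with $M_{ij}=\binom{k+i}{j-1}$ and proves, via a separate Vandermonde-style lemma (its Lemma~3.2), that $\det M=1$, so $M^{-1}$ is an integer matrix. You instead interpolate in the shifted basis $\binom{t-k-1}{j}$, where the system becomes lower-triangular with unit diagonal (a forward-difference/Newton expansion), making integrality of the $a_j$ immediate; you then convert back to the unshifted basis using the fact that argument shifts preserve the $\Z$-lattice spanned by $\binom{t}{0},\dots,\binom{t}{j}$ (by iterated Pascal or Vandermonde convolution). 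Both routes are sound. Yours avoids the explicit determinant computation and is arguably more elementary, at the cost of the extra change-of-basis step; the paper's determinant lemma is slightly more self-contained and gives, as a byproduct, a clean general statement about matrices built from polynomial sequences of strictly increasing degree. Either proof correctly addresses the footnote's concern that naive interpolation in the monomial basis would yield rational rather than integer coefficients.
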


Lemma \ref{Apoly} is more general than a result claimed without proof by Srinivasan (\cite{Srinivasan13}, Lemma 14). It also generalizes of a theorem of Bhatnagar et al.~(\cite{BhatnagarGL06}, Theorem 2.8). 

\begin{proof}
Our polynomial $p$ will have the form \[p(x_1, \ldots, x_n) = \sum_{i=0}^{r-1} a_i \cdot \sum_{\substack{\alpha \in \{ 0,1\}^n \\ |\alpha| = i}} \left( \prod_{j=1}^{n} x_j^{\alpha_j} \right)\] for some constants $a_0, \ldots, a_{r-1}$. Hence, we will get that for any $x \in \{ 0, 1 \}^n$:
\[p(x) = \sum_{i=0}^{r-1} \binom{|x|}{i} a_i.\] Define the matrix:
\[M = 
\left( \begin{matrix}
  \binom{k+1}{0} & \binom{k+1}{1} & \cdots & \binom{k+1}{r-1} \\
  \binom{k+2}{0} & \binom{k+2}{1} & \cdots & \binom{k+2}{r-1} \\
  \vdots & \vdots & \ddots & \vdots \\
  \binom{k+r}{0} & \binom{k+r}{1} & \cdots & \binom{k+r}{r-1}
 \end{matrix} \right).\]
The conditions of the stated lemma are that $$M 
\left( \begin{matrix}
  a_0 \\
  a_1 \\
  \vdots \\
  a_{r-1}
 \end{matrix} \right)
=
\left( \begin{matrix}
  c_1 \\
  c_2 \\
  \vdots \\
  c_r
 \end{matrix} \right).$$ 
By Lemma \ref{det} (proved below), $M$ always has determinant 1. Because $M$ is a matrix with integer entries and determinant 1, its inverse $M^{-1}$ is also an integer matrix. Multiplying through by $M^{-1}$ above gives integer expressions for the $a_i$, as desired.
\end{proof}

\begin{lemma} \label{det}
For any univariate polynomials $p_1, p_2, \ldots, p_{r}$ such that $p_i$ has degree $i-1$, and any pairwise distinct $x_1, x_2, \ldots, x_{r} \in \Z$, the matrix $$M = 
\left( \begin{matrix}
  p_1(x_1) & p_2(x_1) & \cdots & p_{r}(x_1) \\
  p_1(x_2) & p_2(x_2) & \cdots & p_{r}(x_2) \\
  \vdots & \vdots & \ddots & \vdots \\
  p_1(x_{r}) & p_2(x_{r}) & \cdots & p_{r}(x_{r}) \\
 \end{matrix} \right)$$ has determinant
\[det(M) = \left( \prod_{i=1}^{r} c_i \right) \cdot \left( \prod_{1 \leq i < j \leq r} (x_j - x_i) \right),\] where $c_i$ is the coefficient of $x^{i-1}$ in $p_i$.
\end{lemma}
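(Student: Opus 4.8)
The plan is to reduce the determinant of $M$ to an ordinary Vandermonde determinant by performing column operations that replace each polynomial $p_i$ by its leading term $c_i x^{i-1}$ without changing the determinant. Concretely, I would argue by induction on $r$, or equivalently by the following direct observation: the column space is unchanged if I subtract from the $i$-th column (the evaluations of $p_i$) suitable multiples of the columns $1,\dots,i-1$. Since $p_1,\dots,p_r$ have degrees $0,1,\dots,r-1$ respectively, they form a basis of the space of polynomials of degree $< r$, and in particular each $p_i$ can be written as $c_i x^{i-1}$ plus a $\Z$- or field-linear combination of $p_1,\dots,p_{i-1}$ (this is just Gaussian elimination on the coefficient vectors, working from the top-degree coefficient downward). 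Performing these column operations — which are of the form "column $i$ $\mathrel{-}=$ (scalar)$\cdot$column $j$" for $j<i$ — does not change $\det(M)$.

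After these operations, the $i$-th column becomes $(c_i x_1^{i-1}, c_i x_2^{i-1}, \dots, c_i x_r^{i-1})^\top$, i.e. $c_i$ times the $i$-th column of the standard Vandermonde matrix $V$ with $V_{\ell,i} = x_\ell^{i-1}$. Factoring out $c_i$ from column $i$ for each $i$ gives
\[
\det(M) = \left( \prod_{i=1}^r c_i \right) \det(V) = \left( \prod_{i=1}^r c_i \right) \prod_{1 \le i < j \le r} (x_j - x_i),
\]
using the classical Vandermonde determinant formula. This is exactly the claimed identity.

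To see how this yields the determinant-$1$ claim used in Lemma \ref{Apoly}: there we take $p_i(x) = \binom{x}{i-1} = x(x-1)\cdots(x-i+2)/(i-1)!$, which has degree $i-1$ and leading coefficient $c_i = 1/(i-1)!$, evaluated at the $r$ distinct integer points $x_\ell = k+\ell$ for $\ell = 1,\dots,r$. Then $\prod_{i=1}^r c_i = \prod_{i=1}^r 1/(i-1)! = 1/\prod_{i=0}^{r-1} i!$, while $\prod_{1\le i<j\le r}(x_j - x_i) = \prod_{1\le i<j\le r}(j-i) = \prod_{i=0}^{r-1} i!$ (the standard evaluation of a Vandermonde on consecutive integers), so the product is $1$.

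The only mild subtlety — and the step I expect to require the most care in writing up — is justifying that the column-reduction coefficients exist over $\Z$ when we want an integer statement: in the generality stated, $p_i$ may have rational or field coefficients, so the column operations should be understood over the field of fractions (or over $\mathbb{Q}$), which is fine since the determinant formula is a polynomial identity in the $x_\ell$ and the coefficients of the $p_i$. For the application in Lemma \ref{Apoly}, integrality of $M^{-1}$ then follows not from the column operations being integral but simply from $\det(M) = \pm 1$ together with $M$ having integer entries (Cramer's rule / the adjugate formula). So the proof of Lemma \ref{det} itself can safely work over a field, and integrality is recovered afterwards.
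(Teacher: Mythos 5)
Your proof is correct and takes essentially the same approach as the paper: reduce to the ordinary Vandermonde determinant by column operations that replace each $p_i$-column with its leading-term column $c_i x^{i-1}$, then factor out the $c_i$'s. The paper phrases the reduction as ``for $i$ from $1$ to $r-1$, add multiples of column $i$ to subsequent columns to zero the coefficient of $x^{i-1}$,'' which is the same triangular elimination you describe; your added remark that the column operations live over the fraction field and that integrality of $M^{-1}$ in Lemma~\ref{Apoly} follows from $\det M = 1$ via the adjugate (not from the operations being integral) is a sound clarification of a point the paper leaves implicit.
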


\begin{proof}
For $i$ from $1$ up to $r-1$, we can add multiples of column $i$ of $M$ to the subsequent columns in order to make the coefficient of $x^{i-1}$ in all the other columns $0$. The resulting matrix is

$$M' = 
\left( \begin{matrix}
  c_1 & c_2 x_1 & \cdots & c_{r} x_1^{r-1} \\
  c_1 & c_2 x_2  & \cdots & c_{r} x_2^{r-1} \\
  \vdots & \vdots & \ddots & \vdots \\
  c_1 & c_2 x_{r} & \cdots & c_{r} x_{r}^{r-1} \\
 \end{matrix} \right).$$

This is a Vandermonde matrix which has the desired determinant. \end{proof}

\paragraph{Definition.} Let $n$ be an integer for which we want to compute $TH_\theta$. Let $M_{m, \theta, \epsilon} : \{ 0,1\}^m \to \Z$ denote the probabilistic polynomial for $\TH_\theta$ with error $\leq \epsilon$ degree as described above for all $m < n$. We can assume as a base case that when $m$ is constant, we simply use the exact polynomial for $\TH_\theta$.

Define \[S_{m, \theta, \delta, \epsilon}(x) := (1-M_{m, \theta + \delta, \epsilon}(x)) \cdot M_{m, \theta - \delta, \epsilon}(x).\]
Assuming $M_{n, \theta, \epsilon}$ works as prescribed (with $\leq \epsilon$ error), this is a probabilistic polynomial for $\NEAR_{\theta, \delta}$ with error at most $2 \epsilon$. For $x \in \{0,1\}^n$, let $\tilde{x} \in \{0,1\}^{n/10}$ be a vector of length $n/10$, where each entry is an independent and uniformly random entry of $x$. Hence, each entry of $\tilde{x}$ is a probabilistic polynomial in $x$ of degree $1$. Let $a = \sqrt{10}\cdot  \sqrt{\ln (1/\epsilon)}$. Our probabilistic polynomial for $\TH_\theta$ on $n$ variables is defined to be:
\[M_{n, \theta, \epsilon}(x) := A_{n, \theta, 2a}(x) \cdot S_{n/10, \theta, a/\sqrt{n}, \epsilon/4}(\tilde{x})  +  M_{n/10, \theta, \epsilon/4}(\tilde{x}) \cdot (1 - S_{n/10, \theta, a/\sqrt{n}, \epsilon/4}(\tilde{x})).\]

Note that $\tilde{x}$ denotes the \emph{same} randomly chosen vector in each of its appearances, and $S_{n/10, \theta, a/\sqrt{n}, \epsilon/4}$ denotes the same draw from the random polynomial distribution in both of its appearances.

\paragraph{Degree of $M_{n, \theta, \eps}$.} First we show by induction on $n$ that $M_{n, \theta, \eps}$ has degree $\leq 41 \sqrt{n \ln(1/\epsilon)}$. Assume that $M_{m, \theta, \epsilon}$ has degree $\leq 41 \sqrt{m \ln(1/\epsilon)}$ for all $m < n$. We have:
\begin{align*} \deg (M_{n, \theta, \epsilon}) &= \max \left\{ \deg\left[A_{n, \theta, 2a}(x) \cdot S_{n/10, \theta, a/\sqrt{n}, \epsilon/4}(\tilde{x}) \right], \deg\left[M_{n/10, \theta, \epsilon/4}(\tilde{x}) \cdot (1 - S_{n/10, \theta, a/\sqrt{n}, \epsilon/4}(\tilde{x}))\right] \right\} \\
&= \deg(S_{n/10, \theta, a/\sqrt{n}, \epsilon/4}(\tilde{x})) + \max \{\deg(A_{n, \theta, 2a}(x)), \deg(M_{n/10, \theta, \epsilon/4}(\tilde{x})) \} \\
&= 2 \cdot 41 \sqrt{\frac{n}{10} \ln(4/\epsilon)} + \max \left\{ 4 a \sqrt{n}, 41 \sqrt{\frac{n}{10} \ln(4/\epsilon)} \right\} \\
&= 2 \cdot 41 \sqrt{\frac{n}{10} \ln(4/\epsilon)} + \max \left\{ 4 \cdot (\sqrt{10} \sqrt{\ln (1/\epsilon)}) \cdot \sqrt{n}, 41 \sqrt{\frac{n}{10} \ln(4/\epsilon)} \right\} \\
&= 3 \cdot 41 \sqrt{\frac{n}{10} \ln(4/\epsilon)} \leq 41 \sqrt{n \ln(1/\epsilon)}.
\end{align*}

\paragraph{Time to compute $M_{n, \theta, \eps}$}

Computing $A_{n, \theta, 2a}$ can be done in $\poly{(n)}$ time as described in Lemma \ref{Apoly}, as can sampling $\tilde{x}$ from $x$. Given the three recursive $M_{n/10, \theta', \eps/4}$ polynomials, we can then compute $M_{n, \theta, \eps}$ in three multiplications. Each recursive polynomial has degree at most $d(n/10, \eps/4)$, and hence at most $\sum_{i=0}^{d(n/10,\eps/4)} {n \choose i}$ monomials. Since the time for these multiplications dominates the time for the recursive computations, the total time is $\tilde{O}(\sum_{i=0}^{d(n,\eps)} {n \choose i})$  using the fast Fourier transform\footnote{By replacing each variable with increasing powers of a single variable, we can reduce multivariate polynomial multiplication to single variable polynomial multiplication.}, as desired.

\paragraph{Correctness.} Now we prove that $M_{n,\theta,\eps}$ correctly simulates $\TH_\theta$ with probability at least $1-\eps$, on all possible inputs. We begin by citing two lemmas explaining our choice of the parameter $a$.

\begin{lemma}[Hoeffding's Inequality for Binomial Distributions (\cite{Hoeffding}~Theorem 1)] \label{hoeffding}
If $m$ independent random draws $x_1, \ldots, x_m \sim \{ 0,1 \}$ are made with $\Pr[x_i=1] = p$ for all $i$, then for any $k \leq mp$ we have
\[\Pr \left[ \sum_{i=1}^m x_i \leq k \right] \leq \exp\left(-\frac{2(mp-k)^2}{m}\right),\] where $\exp(x) = e^x$.
\end{lemma}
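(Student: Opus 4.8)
The plan is to prove this by the standard exponential-moment (Chernoff) method, specialized to Bernoulli summands. Write $S := \sum_{i=1}^m x_i$ and fix a parameter $t > 0$ to be chosen later. Since $e^{-tS} \geq 0$, Markov's inequality gives
\[
\Pr[S \leq k] = \Pr\!\left[e^{-tS} \geq e^{-tk}\right] \leq e^{tk}\,\mathbb{E}\!\left[e^{-tS}\right] = e^{tk}\prod_{i=1}^m \mathbb{E}\!\left[e^{-t x_i}\right],
\]
where the last equality uses independence of the $x_i$. So the whole statement reduces to bounding the moment generating function of a single Bernoulli variable and then choosing $t$ well.

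For the single-variable bound I would establish the cumulant estimate $\log \mathbb{E}[e^{s x_i}] \leq sp + s^2/8$ for every real $s$ (to be applied with $s = -t$). Set $\psi(s) := \log \mathbb{E}[e^{s x_i}] = \log(1 - p + p e^s)$. One computes $\psi(0) = 0$, $\psi'(0) = p$, and $\psi''(s) = u(s)\bigl(1 - u(s)\bigr)$ where $u(s) := p e^s/(1 - p + p e^s) \in [0,1]$; hence $\psi''(s) \leq 1/4$ for all $s$. Taylor's theorem with Lagrange remainder then gives $\psi(s) = \psi(0) + \psi'(0)\,s + \tfrac12\psi''(\xi)\,s^2 \leq ps + s^2/8$ for some $\xi$ between $0$ and $s$ --- this is exactly the Bernoulli case of Hoeffding's lemma, and its only real content is the elementary inequality $u(1-u) \leq 1/4$. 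Taking $s = -t$ in each factor yields $\mathbb{E}[e^{-t x_i}] \leq e^{-tp + t^2/8}$, so
\[
\Pr[S \leq k] \leq e^{tk}\cdot e^{-tmp + m t^2/8} = \exp\!\left(-t(mp - k) + \tfrac{m}{8}t^2\right).
\]

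Finally I would minimize the quadratic exponent over $t$: its minimum is attained at $t^\star = 4(mp - k)/m$, which is nonnegative precisely because $k \leq mp$ by hypothesis, and substituting back gives exponent $-2(mp-k)^2/m$, the claimed bound. The only substantive step is the moment generating function estimate for the bounded summand; everything else is Markov's inequality and a one-line optimization, so I expect that step to be the main (though standard) obstacle. An alternative to the Taylor argument for it is the convexity bound $e^{sx} \leq (1-x) + x e^s$ valid for $x \in [0,1]$: taking expectations recovers $\mathbb{E}[e^{s x_i}] = 1 - p + p e^s$, after which $\log(1 - p + p e^s) \leq ps + s^2/8$ is checked as above.
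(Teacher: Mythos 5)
The paper does not prove this lemma; it simply cites it as Theorem 1 of Hoeffding's 1963 paper and uses it as a black box. Your proposal supplies a self-contained proof, and it is correct in every step: Markov's inequality applied to $e^{-tS}$, factorization of the moment generating function by independence, the cumulant bound $\log\mathbb{E}[e^{sx_i}] \le sp + s^2/8$ via $\psi''(s) = u(s)(1-u(s)) \le 1/4$ (the Bernoulli case of Hoeffding's lemma), and optimization at $t^\star = 4(mp-k)/m$, which is nonnegative exactly under the hypothesis $k \le mp$ and yields the exponent $-2(mp-k)^2/m$. This is the standard exponential-moment derivation and essentially the argument in Hoeffding's original paper specialized to $\{0,1\}$-valued summands, so it matches the cited source rather than diverging from it. One small stylistic note: you should state up front that if $mp = k$ the bound is trivial (the right-hand side is $1$), so one may assume $mp > k$ and hence $t^\star > 0$ strictly, which keeps the Markov step ($t>0$) clean; as written, $t^\star = 0$ would make $e^{-tS} \ge e^{-tk}$ degenerate. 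This is a trivial boundary case and does not affect correctness.
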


\begin{lemma} \label{lem}
If $x \in \{ 0,1 \}^n$ with $|x|/n = w$, and $\tilde{x} \in \{0,1\}^{n/10}$ is a vector each of whose entries is an independent and uniformly random entry of $x$, with $|\tilde{x}|/(n/10) = v$, then for every $\eps < 1/4$,
\[\Pr\left[v \leq w - a/\sqrt{n}\right] \leq \frac{\eps}{4},\]
where $a = \sqrt{10} \cdot \sqrt{\ln (1/\epsilon)}$.
\end{lemma}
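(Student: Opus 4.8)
The plan is to apply Hoeffding's inequality (Lemma~\ref{hoeffding}) directly to the $n/10$ independent samples that make up $\tilde x$. Each entry of $\tilde x$ is an independent draw from $\{0,1\}$ that equals $1$ with probability exactly $w = |x|/n$, since it is a uniformly random coordinate of $x$. So $\sum_i \tilde x_i$ is a sum of $m := n/10$ i.i.d.\ Bernoulli$(w)$ variables, with mean $mw = (n/10)w$. We want to bound $\Pr[v \le w - a/\sqrt n]$, i.e.\ $\Pr\big[\sum_i \tilde x_i \le (n/10)(w - a/\sqrt n)\big]$; set $k = (n/10)(w-a/\sqrt n) = mw - (n/10)\,a/\sqrt n$.

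First I would check the hypothesis $k \le mp = mw$ of Lemma~\ref{hoeffding}, which holds since $a/\sqrt n > 0$ (here one may assume $w \ge a/\sqrt n$, as otherwise the event has probability $0$ and the bound is trivial; or one can note the inequality $k \le mw$ holds regardless of sign of $k$). Then Hoeffding gives
\[
\Pr\Big[\textstyle\sum_i \tilde x_i \le k\Big] \;\le\; \exp\!\Big(-\frac{2(mw-k)^2}{m}\Big) \;=\; \exp\!\Big(-\frac{2\,((n/10)\,a/\sqrt n)^2}{n/10}\Big) \;=\; \exp\!\Big(-\frac{2}{10}\cdot \frac{a^2 n}{n}\Big) \;=\; \exp\!\big(-\tfrac{1}{5}a^2\big).
\]
Plugging in $a^2 = 10\ln(1/\eps)$ yields $\exp(-2\ln(1/\eps)) = \eps^2$, and since $\eps < 1/4$ we have $\eps^2 < \eps/4$, which is the claimed bound. (In fact the argument gives the stronger bound $\eps^2$; the statement only asks for $\eps/4$.)

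There is no real obstacle here — the only things to be careful about are: (i) confirming the Bernoulli parameter is exactly $w$ (true because sampling a uniformly random coordinate of a weight-$wn$ vector gives a $1$ with probability $w$), and (ii) the arithmetic simplification $\frac{((n/10)a/\sqrt n)^2}{n/10} = \frac{a^2}{10}$, so that the exponent is $-\frac{2}{10}a^2 = -\frac{a^2}{5}$. With $a = \sqrt{10}\cdot\sqrt{\ln(1/\eps)}$ this is $-2\ln(1/\eps)$, giving $\eps^2 \le \eps/4$ for $\eps < 1/4$. Note the same bound applies symmetrically to $\Pr[v \ge w + a/\sqrt n]$ via the union-bound-free symmetric form of Hoeffding, which is presumably why the construction can tolerate the two-sided $\NEAR$ error using this $a$.
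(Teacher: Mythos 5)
Your proposal is correct and follows essentially the same route as the paper: apply Hoeffding's inequality (Lemma~\ref{hoeffding}) with $p=w$, $m=n/10$, $k=m(w-a/\sqrt n)$, simplify the exponent to $-a^2/5 = -2\ln(1/\eps)$, and conclude $\eps^2 < \eps/4$. Your extra care about the hypothesis $k \le mp$ and the closing remark about the symmetric tail are harmless additions; the core calculation is identical to the paper's.
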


\begin{proof} Each entry of $\tilde{x}$ is drawn from a binomial distribution with probability $w$ of giving a $1$. Hence, applying Lemma~\ref{hoeffding} with $p=w$, $m = n/10$, and $k = \frac{n}{10}(w - a/\sqrt{n}) = \frac{nw}{10} - \frac{a \sqrt{n}}{10}$ yields:
\[\Pr[v \leq w - a/\sqrt{n}] = \Pr\left[|\tilde{x}| \leq \frac{nw}{10} - \frac{a \sqrt{n}}{10}\right] \leq \exp\left(-2 \frac{\left( \frac{a \sqrt{n}}{10} \right)^2}{\frac{n}{10}}\right),\] which simplifies to $ \exp\left(- \frac{a^2}{5}\right) = \exp(-2 \ln(1/\epsilon)) = \epsilon^2 < \frac{\eps}{4}$.
\end{proof}

We now move on to the main proof of correctness, which proceeds by induction on $n$. By symmetry, we may assume we have an input vector $x \in \{0,1\}^n$ with $|x|/n \geq \theta$, and we want to show that $M_{n, \theta, \epsilon}(x)$ outputs $1$ with probability at least $1 - \epsilon$. We assume $\epsilon < 1/4$ so that we may apply Lemma \ref{lem}.

For notational convenience, define the intervals:
\[\alpha_0 = [\theta - a/\sqrt{n}, \theta], ~ \alpha_1 = [\theta, \theta + a/\sqrt{n}], ~ \beta = [\theta + a/\sqrt{n}, \theta + 2a/\sqrt{n}], ~ \gamma = [\theta + 2a/\sqrt{n}, 1].\]

Note that depending on the values of $\theta$ and $a$, some of these intervals may be empty; this is not a problem for our proof.

Let $w = |x|/n$. Let $\tilde{x}$ be the random ``subvector'' of $x$ selected in $M_{n, \theta, \epsilon}$ (recall we use the same $\tilde{x}$ in each of the three locations it appears in the definition of $M$). Let $v = |\tilde{x}|/(n/10)$. Our proof strategy is to consider different cases depending on the value of $w$. For each case, we show there are at most four events such that, if all events hold then $M_{n, \theta, \epsilon}$ outputs the correct answer, and each event does not hold with probability at most $\frac{\eps}{4}$. By the union bound, this implies that $M_{n, \theta, \epsilon}$ gives the correct answer with probability at least $1 - \epsilon$. The cases are as follows:

\begin{enumerate}
\item {\bf $w \in \alpha_1$ ($|x|/n$ is ``very close'' to $\theta$)}. By Lemma \ref{lem}, we know that with probability at least $1 - \frac{\epsilon}{4}$, we have $v \geq \theta - a/\sqrt{n}$. In other words, $v \in \alpha_0 \cup \alpha_1 \cup \beta \cup \gamma$. 
\begin{itemize}
\item {\bf $v \in \alpha_0 \cup \alpha_1$}, then with probability at least $1 - \frac{2\epsilon}{4}$, we have $S_{n/10, \theta, a/\sqrt{n}, \epsilon/4}(\tilde{x}) = 1$, by our inductive assumption that $S_{n/10, \theta, a/\sqrt{n}, \epsilon/4}$ is a probabilistic polynomial for $\NEAR_{\theta, a/\sqrt{n}}$ with error probability at most $\frac{2\epsilon}{4}$. In this case, $M_{n, \theta, \epsilon}(x) = A_{n, \theta, 2a}(x)$, which is $1$ by definition of $A$.
\item {\bf $v \in \beta \cup \gamma$}, then with probability at least $1 - \frac{2\epsilon}{4}$, we have $S_{n/10, \theta, a/\sqrt{n}, \epsilon/4}(\tilde{x}) = 0$, in which case $M_{n, \theta, \epsilon}(x) = M_{n/10, \theta, \epsilon/4}(\tilde{x})$. But, by the inductive hypothesis, this is $1$ with probability at least $1 - \frac{\epsilon}{4}$, since $v > \theta$ in this case.
\end{itemize}
Since we are in one of these two cases with probability $\geq 1 - \frac14 \epsilon$, and each gives the correct answer with probability $\geq 1 - \frac{3\epsilon}{4}$, the correct answer is given in this case with probability $\geq 1 - \epsilon$.

\item {\bf $w \in \beta$ ($|x|/n$ is ``close'' to $\theta$)}. In this case we have $w - \theta \leq 2a/\sqrt{n}$, therefore $A_{n, \theta, 2a}(x) = 1$. Hence, if $S_{n/10, \theta, a/\sqrt{n}, \epsilon/4}(\tilde{x}) = 1$ then $M_{n, \theta, \epsilon}(x)$ returns the correct answer. If $S_{n/10, \theta, a/\sqrt{n}, \epsilon/4}(\tilde{x}) = 0$, then we return $M_{n/10, \theta, \epsilon/4}(\tilde{x})$. By Lemma \ref{lem}, we have $v \geq \theta$ with probability at least $1 - \frac{\epsilon}{4}$, and in this case, $M_{n/10, \theta, \epsilon/4}(\tilde{x}) = 1$ with probability $\geq 1 - \frac{\epsilon}{4}$. Hence, $M$ returns the correct value with probability at least $1 - \frac{2\epsilon}{4}$, no matter what the value of $S_{n/10, \theta, a/\sqrt{n}, \epsilon/4}(y)$ happens to be.

\item {\bf $w \in \gamma$ ($|x|/n$ is ``far'' from $\theta$)}. By Lemma \ref{lem}, we have $v \in \beta \cup \gamma$ with probability at least $1 - \frac{\epsilon}{4}$. In this case, $v \geq \theta$, and so $M_{n/10, \theta, \epsilon/4}(\tilde{x}) = 1$ with probability $\geq 1 - \frac{\epsilon}{4}$. Moreover, since $v \notin \alpha_0 \cup \alpha_1$, it follows that $S_{n/10, \theta, a/\sqrt{n}, \epsilon/4}(\tilde{x}) = 0$ with probability $\geq 1 - \frac24 \epsilon$, in which case $M_{n, \theta, \epsilon}(x) = M_{n/10, \theta, \epsilon/4}(\tilde{x})$. Overall, $M_{n, \theta, \epsilon}(x) = M_{n/10, \theta, \epsilon/4}(\tilde{x}) = 1$ with probability $\geq 1 - \epsilon$.
\end{enumerate}

This completes the proof of correctness, and the proof of Theorem~\ref{MAJpoly}. 

\subsection{Symmetric Functions}

Recall that $f : \{ 0,1 \} \to \{ 0,1 \}^n$ is \emph{symmetric} if the value of $f(x)$ depends only on $|x|$, the Hamming weight of $x$. We now describe how to use the probabilistic polynomial for $\TH_\theta$ to derive a probabilistic polynomial for any symmetric function with the same degree as $\TH_{\theta}$:

\begin{reminder}{Theorem~\ref{SYMpoly}} Every symmetric function $f : \{ 0,1 \} \to \{ 0,1 \}^n$ on $n$ variables has a probabilistic polynomial of $O(\sqrt{n \log( 1 / \epsilon)})$ degree and error $\epsilon$. 
\end{reminder}

\begin{proof}
For any $0 \leq i \leq n$, let $f_i$ denote the value of $f(x)$ when $x$ has Hamming weight $i$. Define:$$A = \{ 0 < i \leq n \mid f_i = 1 \text{ and } f_{i-1} = 0 \},$$
$$B = \{ 0 < i \leq n \mid f_i = 0 \text{ and } f_{i-1} = 1 \}.$$

Then, $f$ can be written exactly as: \begin{align}\label{decomp}f(x) = f_0 + \sum_{i \in B} TH_{i/n}(x) - \sum_{j \in A} TH_{j/n}(x).\end{align}

We replace each $TH_\theta$ in \eqref{decomp} with a probabilistic polynomial of Theorem~\ref{MAJpoly} with error $\delta = \epsilon/2$. However, we make sure that in all of the different probabilistic polynomials for $TH_\theta$, we make the same choice for the sampled vector $\tilde{x}$ at each iteration. We can then apply the proof of Theorem~\ref{MAJpoly}, to see that every one of the $TH_\theta$ probabilistic polynomials will give the correct answer as long as $\left||x|/n - |\tilde{x}|/(n/10)\right| < a/\sqrt{n}$ at each of the $\log_{10}(n)$ layers of recursion (this is a property only of the sampling, and independent of $\theta$). Recall that the error parameter at the $i$th level of the recursion is $\frac{1}{4^i} \delta$. Hence, by the union bound, the error probability of the entire probabilistic polynomial is at most

$$\delta + \frac{1}{4}\delta + \frac{1}{16}\delta + \cdots + \frac{1}{4^{\log_{10}}(n)} \delta < \frac{1}{1-1/4}\delta < \epsilon,$$
as desired.
\end{proof}

\section{Closest Pair in Hamming Space, and Batch Nearest Neighbor}

We first give a connection between the time complexity of closest pair problems in metric spaces on the hypercube and the existence of certain probabilistic polynomials. Let $M$ be a metric on $\{0,1\}^d$. We define the {\sc Bichromatic $M$-Metric Closest Pair} problem to be: given an integer $k$ and a collection of ``red'' and ``blue'' vectors in $\{0,1\}^d$, determine if there is a pair of red and blue vectors with distance at most $k$ under metric $M$. This problem arises frequently in algorithms on a metric space $M$. In what follows, we shall assume that the metric $M$ can be computed on two points of $d$ dimensions in time $\poly(d)$. Define the Boolean function \begin{align*}
\text{$M$-dist}_k(x_{1,1},\ldots,x_{1,d},\ldots,x_{s,1},\ldots,x_{s,d},y_{1,1},\ldots,y_{1,d},\ldots,y_{s,1},\ldots,y_{s,d})\\ := \bigvee_{i,j = 1,\ldots,s} \text{\bf [}M(x_{i,1},\ldots,x_{i,d},y_{j,1},\ldots,y_{j,d}) \leq k\text{\bf ]}.\end{align*} That is, $\text{$M$-dist}_k$ takes two collections of $s$ vectors as input, and outputs $1$ if and only if there is a pair of vectors (one from each collection) that have distance at most $k$ under metric $M$. For example, the $\text{Hamming-dist}_k$ function tests if there is a pair of vectors with Hamming distance at most $k$.

We observe that sparse probabilistic polynomials for computing $\text{$M$-dist}_k$ imply subquadratic time algorithms for finding close bichromatic pairs in metric $M$. 

\begin{theorem} \label{probpoly2Bichromatic} Suppose for all $k$, $d$, and $n$, there is an $s = s(d,n)$ such that $\text{$M$-dist}_k$ on $2sd$ variables has a probabilistic polynomial with at most $n^{0.17}$ monomials and error at most $1/3$, where each sample can be produced in $\tilde{O}(n^2/s^2)$ time. Then {\sc Bichromatic $M$-Metric Closest Pair} on $n$ vectors in $d$ dimensions can be solved in $\tilde{O}(n^2/s^2 + s^2\cdot \poly(d))$ randomized time.
\end{theorem}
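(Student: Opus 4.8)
The plan is to instantiate the polynomial method of \cite{WilliamsAPSP14}: bundle the $n^2$ pairwise distance comparisons into roughly $(n/s)^2$ evaluations of the sparse probabilistic polynomial guaranteed by the hypothesis --- each evaluation resolving an $s \times s$ block of comparisons --- and carry out all of those evaluations at once via Lemma~\ref{polyeval}. First, arbitrarily partition the red vectors into $N := \lceil n/s \rceil$ groups $R_1,\dots,R_N$ and the blue vectors into $N$ groups $B_1,\dots,B_N$, each of exactly $s$ vectors (pad the last group of each color with duplicates of one of its own vectors; any pair this creates is a genuine bichromatic pair of the original instance, so it is harmless). The concatenation of the $s$ vectors of $R_a$ followed by the $s$ vectors of $B_b$ is a point of $\{0,1\}^{2sd}$ on which $M\text{-dist}_k$ evaluates to $1$ precisely when $R_a\times B_b$ contains a pair at distance $\leq k$; since every bichromatic pair lies in a unique $R_a\times B_b$, the instance is a ``no'' instance iff $M\text{-dist}_k$ vanishes on all $N^2$ group pairs. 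Invoke the hypothesis (with the given $n$, $d$, and $k$) to obtain the prescribed $s=s(d,n)$ together with a probabilistic polynomial $P$ for $M\text{-dist}_k$ on $2sd$ variables having $\leq n^{0.17}$ monomials, error $\leq 1/3$, and sampling time $\tilde{O}(n^2/s^2)$.

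A per-group-pair error of $1/3$ is far too large to union-bound over the $N^2\leq n^2$ group pairs, so the next step is amplification --- done, crucially, \emph{outside} the polynomials. Draw $t = O(\log n)$ independent samples $P_1,\dots,P_t$ from the distribution of $P$ (total sampling time still $\tilde{O}(n^2/s^2)$). For each group pair we will return the majority of the $t$ values $P_1(\cdot),\dots,P_t(\cdot)$; by a Chernoff bound this majority agrees with $M\text{-dist}_k$ on a fixed group pair with probability $\geq 1 - n^{-10}$, so by a union bound it is simultaneously correct on all $N^2$ group pairs with high probability. Because we never compose the $P_i$, each of them still has $\leq n^{0.17}$ monomials, which is exactly what keeps us inside the sparsity regime where Lemma~\ref{polyeval} is fast.

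To actually evaluate the $P_i$ on all group pairs: for each $i$, apply Lemma~\ref{polyeval} with the $N$ red group-concatenations as one input set and the $N$ blue group-concatenations as the other (the ``dimension'' is now $sd$), obtaining $P_i$ on all $N^2$ group pairs in $\tilde{O}(N^2 + sd\cdot N^{1.17})$ time; over the $t = O(\log n)$ samples this is $\tilde{O}(n^2/s^2 + sd\,(n/s)^{1.17})$. Take majorities; if they are all $0$, output ``no''. Otherwise pick any group pair $(a,b)$ with nonzero majority and brute-force the $s^2$ distance computations inside $R_a\times B_b$ in $s^2\cdot\poly(d)$ time --- with high probability this exhibits an actual close pair, so we may even return a witness and obtain one-sided error. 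Summing the three phases and simplifying (a routine calculation; in every regime where this theorem gets applied, $s$ is small enough that the overhead term $sd\,(n/s)^{1.17}$ is dominated by $n^2/s^2$) yields the claimed $\tilde{O}(n^2/s^2 + s^2\,\poly(d))$ running time.

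The single real obstacle is making amplification and sparsity coexist: we need the failure probability per group pair pushed below $1/N^2$, yet the speedup of Lemma~\ref{polyeval} hinges on the number of monomials staying at $n^{0.17}$, and the obvious amplification --- composing $P$ with the MAJORITY polynomial of Theorem~\ref{MAJpoly} --- would inflate the monomial count to roughly $n^{0.17\sqrt{t}}$ and destroy the bound. The resolution is the one above: independent repetition with an external majority vote drives the error down to $2^{-\Omega(t)}$ while leaving each $P_i$ exactly as sparse as $P$. Everything after that is bookkeeping.
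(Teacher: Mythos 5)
Your proof is correct and follows essentially the same route as the paper's: partition each color into $\lceil n/s\rceil$ groups of size $s$, sample $O(\log n)$ independent polynomials from the hypothesized distribution, evaluate each on all group pairs via Lemma~\ref{polyeval}, amplify by an external per-pair majority vote with Chernoff plus a union bound, and brute-force a reported group pair in $s^2\cdot\poly(d)$ time. The only additions are minor expository ones (padding groups with duplicates, and the explicit remark that in-polynomial amplification would blow up the monomial count), which do not change the argument.
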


\begin{proof} We have an integer $k$ and sets $R, B \subseteq \{0,1\}^d$ such that $|R|=|B|=n$, and wish to determine if there is a $u \in R$ and $v \in B$ such that $M(u,v) \leq k$. First, partition both $R$ and $B$ into $\lceil n/s \rceil$ groups, with at most $s$ vectors in each group. By assumption, for all $k$, there is a probabilistic polynomial for $\text{$M$-dist}_k$ with $2sd$ variables, $n^{0.17}$ monomials, and error at most $1/3$. Let $p$ be a polynomial sampled from this distribution. Our idea is to efficiently evaluate $p$ on all $O(n^2/s^2)$ pairs of groups from $R$ and $B$, by feeding as input to $p$ all $s$ vectors $x_i$ from a group of $R$ and all $s$ vectors $y_i$ from a group of $B$. 

Since the number of monomials $m \leq n^{0.17}$, we can apply Lemma~\ref{polyeval}, evaluating $p$ on all pairs of groups in time $\tilde{O}(n^2/s^2)$. For each pair of groups from $R$ and $B$, this evaluation determines if the pair of groups contain a bichromatic pair of distance at most $k$, with probability at least $2/3$. 

To obtain a high probability answer, sample $\ell=10\log n$ polynomials $p_1,\ldots,p_{\ell}$ for $\text{$M$-dist}_k$ independently from the distribution, in $\tilde{O}(n^2/s^2)$ time (by assumption). Evaluate each $p_i$ on all pairs of groups from $R$ and $B$ in $\tilde{O}(n^2/s^2)$ time by the above paragraph. Compute the majority value of $p_1,\ldots,p_{\ell}$ on all pairs of groups, again in $\tilde{O}(n^2/s^2)$ time. By Chernoff-Hoeffding bounds, the majority value reported for a pair of groups is correct with probability at least $1-n^{-3}$. Therefore with probability at least $1-n^{-1}$, we correctly determine for all pairs of groups from $R$ and $B$ whether the pair contains a bichromatic pair of vectors with distance at most $k$. 

Given a pair of groups $R'$ and $B'$ which are reported to contain a bichromatic pair of close vectors,we can simply brute force to find the closest pair in $A'$ and $B'$ in $s^2 \cdot \poly(d)$ time. (In principle, we could also perform a recursive call, but this doesn't asymptotically help us in our applications.)
\end{proof}

Next, we construct a probabilistic polynomial for the  $\text{Hamming-dist}_k$ function, using the MAJORITY construction of Theorem~\ref{MAJpoly}. 

\begin{theorem} \label{Hammingprobpoly} There is a $e \geq 1$ such that for sufficiently large $s$ and $d > e^2 \log s$, the $\text{Hamming-dist}_k$ function on $2sd$ variables has a probabilistic polynomial of degree $O(\sqrt{d \log s})$, error at most $1/3$, and at most $O(s^4 \cdot {2d \choose O(\sqrt{d \log s})})$ monomials over $\F_2$. Moreover, we can sample from the probabilistic polynomial distribution in time polynomial in the number of monomials.
\end{theorem}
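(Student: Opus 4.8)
The plan is to compose the Razborov-Smolensky probabilistic $\mathrm{OR}$ over $\F_2$ with $s^2$ independent copies of the threshold polynomial supplied by Theorem~\ref{MAJpoly}. Note that $\text{Hamming-dist}_k$ is the $\mathrm{OR}$, over the $s^2$ pairs $(i,j)$, of the bit $b_{i,j}:=[\Delta(x_i,y_j)\le k]$, and that over $\F_2$ the ``XOR coordinates'' $z_{i,j,\ell}:=x_{i,\ell}+y_{j,\ell}$ are degree-$1$ polynomials that coincide with the true bitwise XOR of $x_i,y_j$ on Boolean inputs, so that $b_{i,j}=1+\TH_{(k+1)/d}(z_{i,j})$ over $\F_2$ (the case $k\ge d$ being trivial). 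Theorem~\ref{MAJpoly} --- or, directly, the construction behind it, which produces a probabilistic polynomial for $\TH_\theta$ over any field for every threshold $\theta$; one could instead pad to reduce $\TH_{(k+1)/d}$ on $d$ bits to MAJORITY on at most $2d$ bits --- yields, independently for each pair, a polynomial $q_{i,j}$ over $\F_2$ of degree $D_1=O(\sqrt{d\log(1/\eps_1)})$ and error $\eps_1$. Substituting $z_{i,j,\ell}\mapsto x_{i,\ell}+y_{j,\ell}$ preserves the degree, so $w_{i,j}:=1+q_{i,j}$ is a probabilistic polynomial over $\F_2$ for $b_{i,j}$ in the $2d$ variables $x_{i,\cdot},y_{j,\cdot}$, of degree $D_1$ and error $\eps_1$.

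I would then apply the standard $\F_2$ probabilistic $\mathrm{OR}$: sample $t$ uniformly random subsets $T_1,\dots,T_t\subseteq[s]\times[s]$, independent of each other and of the $q_{i,j}$, and output
\[ P \;:=\; 1+\prod_{r=1}^{t}\Bigl(1+\sum_{(i,j)\in T_r}w_{i,j}\Bigr). \]
Fix a Boolean input. Condition on the event (of probability $\ge 1-s^2\eps_1$, by a union bound) that every $q_{i,j}$ is correct; then each $w_{i,j}=b_{i,j}$, so $P$ computes $\bigvee_{i,j}b_{i,j}=\text{Hamming-dist}_k$ unless all $t$ partial sums $\sum_{(i,j)\in T_r}w_{i,j}$ vanish --- and since a uniformly random subset-sum of a Boolean vector containing a $1$ is odd with probability exactly $\tfrac12$, independently over $r$, this happens with probability $\le 2^{-t}$. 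Taking $t=2$ and $\eps_1=1/(12 s^2)$ makes the total error at most $\tfrac1{12}+\tfrac14=\tfrac13$, and $D_1=O(\sqrt{d\log(12 s^2)})=O(\sqrt{d\log s})$ for large $s$, whence $\deg P\le 2D_1=O(\sqrt{d\log s})$.

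Finally, $P$ is multilinear (reduce each $x^2$ to $x$) of degree at most $2D_1$ in $2d$ variables, so it has at most $\sum_{i\le 2D_1}\binom{2d}{i}$ monomials; equivalently, each of the two product factors is a sum of $\le s^2$ polynomials with $\le\binom{2d}{D_1}$ monomials apiece. This is precisely where the hypothesis $d>e^2\log s$ is used: choosing the absolute constant $e$ large enough keeps $2D_1<\tfrac23 d$, which makes the relevant binomial sums geometric --- dominated by their largest term --- and collapses the monomial count to $O\!\bigl(s^4\binom{2d}{O(\sqrt{d\log s})}\bigr)$, as claimed. Sampling is routine: draw each $q_{i,j}$ via Theorem~\ref{MAJpoly}, draw the $T_r$ at random, and expand $P$ by FFT-based polynomial multiplication (reducing to the univariate case as in the proof of Theorem~\ref{MAJpoly}); the whole procedure only manipulates polynomials whose monomial counts are bounded by the quantity just established, so it runs in time polynomial in the number of monomials. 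The one genuinely delicate point is this calibration of $e$ (equivalently, restricting to $d>e^2\log s$): it is exactly what keeps $\deg P$ safely below $d$, so that the binomial sums telescope, the monomial bound holds, and the sampling time stays polynomial in the monomial count; the error split between the $s^2$ thresholds and the $\mathrm{OR}$, and the remaining degree and monomial bookkeeping, are then immediate.
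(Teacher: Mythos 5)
Your proposal is correct and takes essentially the same route as the paper: express the per-pair indicator as $1+\TH_{(k+1)/d}$ applied to the degree-$1$ XOR coordinates, plug in the Theorem~\ref{MAJpoly} polynomial with per-pair error $\Theta(1/s^2)$, and wrap the $s^2$ indicators in a two-round Razborov--Smolensky probabilistic $\mathrm{OR}$ over $\F_2$ via two random index sets, splitting the error budget between the threshold polynomials and the $\mathrm{OR}$. The one cosmetic deviation is that you sample an independent $q_{i,j}$ per pair (with $\eps_1=1/(12s^2)$), whereas the paper samples a \emph{single} $p$ with $\eps=1/s^3$ and reuses it at every pair; the union bound goes through either way, since each evaluation of a fixed $p$ on a fixed input is its own bad event. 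One small slip: the opening clause of your monomial count treats $P$ as a degree-$2D_1$ multilinear polynomial ``in $2d$ variables,'' but $P$ lives in $2sd$ variables, and the generic $\sum_{i\le 2D_1}\binom{2sd}{i}$ bound would be far too weak---your ``equivalently'' clause, which exploits that each product factor is a sum of at most $s^2$ blocks, each of degree $\le D_1$ in only $2d$ variables, is the argument the paper actually uses and is what yields the $O(s^4\binom{2d}{O(\sqrt{d\log s})})$ bound.
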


A similar result holds for $\Z$, as well as any field, with minor modifications. (For fields of characteristic $p$, the degree increases by a factor of $p-1$.)

\begin{proof} %Define the $2d$-variable Boolean function $H_k(u,v)$ to be $1$ if and only if the Hamming distance of $u$ and $v$ is at most $k$. For vectors $x_i, y_i \in \{0,1\}^d$, we have $\text{Hamming-dist}_k(x_1,y_1,\ldots,x_s,y_s) = \bigvee_{i,j=1,\ldots,s} H_k(x_i,y_i)$. 
Let $e \geq 1$ be large enough that there is a probabilistic polynomial ${\cal D}_d$ of degree $e\sqrt{d \log(1/\eps)}$ for the threshold function $\TH_{(k+1)/d}$ on $d$ inputs, from Theorem~\ref{MAJpoly}. We construct a probabilistic polynomial ${\cal H}$ for $\text{Haming-dist}_k$ over $\F_2$, as follows: 

Set $\eps = 1/s^3$, and sample $p \sim {\cal D}_d$ with error $\eps$. Let $x_1, y_1,\ldots,x_s,y_s$ be blocks of $d$ Boolean variables, with the $j$th variable of $x_i$ denoted by $x_{i,j}$. Choose two uniform random subsets $R_1,R_2 \subseteq [s]^2$, and form \[q(x_1,y_1,\ldots,x_s,y_s) 
:= 1+\prod_{k=1}^2\left(1+\sum_{(i,j) \in R_k} (1+p(x_{i,1}+y_{j,1},\ldots,x_{i,d}+y_{j,d}))\right).\] First, note that since $\eps = 1/s^3$, all $2s^2$ occurrences of the polynomial $p$ in $q$ output the correct answer with probability at least $1-2/s$. Let us suppose this event occurs.

If there are $x_i$ and $y_i$ with Hamming distance at most $k$, then $p(x_{i,1}+y_{j,1},\ldots,x_{i,d}+y_{j,d})) = 0$ (recall the summation is modulo $2$). Hence the probability that the sum of $(1+p)$'s in $R_1$ is odd is $1/2$. The same is true of $R_2$ independently. Therefore the product of the two sums in the expression for $q$ is $0$ with probability $3/4$, so $q$ outputs $1$ with probability $3/4$. On the other hand, if every $x_i$ and $y_i$ has Hamming distance at least $k$, then $1+p(x_{i,1}+y_{j,1},\ldots,x_{i,d}+y_{j,d}) = 0$ for all $(i,j) \in R_1 \cup R_2$. Therefore the product of the two sums (over $R_1$ and $R_2$) in $q$ is $1$,  hence $q$ outputs $0$ in this case. This shows that $q$ agrees with $\text{Hamming-dist}_k$ on any given input, with probability at least $3/4 - 2/s > 2/3$.

Now we prove the monomial bound. Since we are only evaluating $q$ on $0/1$ points, we may assume $q$ is multilinear, and remove all higher powers of the variables. Assuming $d > e\sqrt{d \log s}$, i.e. \begin{align} \label{requirement} d > e^2 \log s,\end{align} the number of distinct monomials in the multilinear $q$ is at most $O(s^4\cdot {2d \choose e\sqrt{d \log(s)}})$.
\end{proof}

Putting it all together, we obtain a faster algorithm for {\sc Bichromatic Hamming Closest Pair}:

\begin{theorem} \label{BichromaticHD} For $n$ vectors of dimension $d = c(n) \log n$, {\sc Bichromatic Hamming Closest Pair} can be solved in $n^{2-1/O(c(n) \log^2 c(n))}$ time by a randomized algorithm that is correct with high probability. 
\end{theorem}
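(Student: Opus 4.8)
The plan is to combine Theorem~\ref{Hammingprobpoly} with Theorem~\ref{probpoly2Bichromatic}, choosing the block size $s$ to balance the two cost terms against the constraint that the probabilistic polynomial have at most $n^{0.17}$ monomials. The free parameter is $s = s(d,n)$ (the number of vectors per group), and the degree of the polynomial from Theorem~\ref{Hammingprobpoly} is $O(\sqrt{d\log s})$ with at most $O(s^4 \binom{2d}{O(\sqrt{d\log s})})$ monomials over $\F_2$, provided the side condition $d > e^2\log s$ holds.

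\textbf{Step 1: Bound the monomial count.} Write $d = c\log n$. Using $\binom{2d}{t} \le (2d)^t = 2^{t\log(2d)}$ with $t = O(\sqrt{d\log s})$, the monomial count is at most $s^4 \cdot 2^{O(\sqrt{d\log s}\cdot \log d)}$. I would set $s = n^{\beta}$ for a parameter $\beta>0$ to be fixed, so $\log s = \beta\log n$ and $\sqrt{d\log s} = \sqrt{c}\cdot\sqrt{\beta}\cdot\log n$. Then $\sqrt{d\log s}\cdot\log d = \sqrt{c\beta}\cdot\log n\cdot\log(c\log n) = O(\sqrt{c\beta}\,\log n\,\log c)$ for $c \le \poly(n)$ (absorbing the $\log\log n$ term; one should check $d \geq \log n$, i.e. $c \geq 1$, so $\log d = \Theta(\log c + \log\log n)$ and the dominant contribution is $\log c$ when $c$ is large, $\log\log n$ otherwise — either way $O(\log c + \log\log n)$, and for the truly-subquadratic regime $c = O(1)$ this is just $O(\log\log n)$). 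So the exponent of $2$ in the monomial count is $O(\sqrt{c\beta}\,\log n\,(\log c))$ plus the $4\beta\log n$ from the $s^4$ factor. For this to be at most $0.17\log n$ it suffices to take $\beta = \Theta(1/(c\log^2 c))$ — then $\sqrt{c\beta}\log c = \Theta(1)$ can be made a small enough constant, and $4\beta < 0.17$. One must also verify the side condition \eqref{requirement}: $d = c\log n > e^2\log s = e^2\beta\log n$, i.e. $c > e^2\beta$, which holds since $\beta = O(1/(c\log^2 c)) \ll c$.

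\textbf{Step 2: Verify the sampling and running-time hypotheses of Theorem~\ref{probpoly2Bichromatic}.} The hypothesis requires that a sample be producible in $\tilde O(n^2/s^2)$ time; by Theorem~\ref{Hammingprobpoly} sampling takes time polynomial in the number of monomials, which is $n^{O(0.17)} = o(n^2/s^2)$ as long as $s = n^\beta$ with $\beta$ small (certainly $2\beta + 0.17\cdot(\text{const}) < 2$). With these checks done, Theorem~\ref{probpoly2Bichromatic} gives a randomized algorithm running in $\tilde O(n^2/s^2 + s^2\poly(d))$ time. With $s = n^\beta$ and $\beta = \Theta(1/(c\log^2 c))$, the first term is $\tilde O(n^{2-2\beta}) = n^{2 - 1/O(c\log^2 c)}$, and the second term $s^2\poly(d) = n^{2\beta}\poly(c\log n)$ is easily dominated by it (since $2\beta < 2 - 2\beta$ for small $\beta$ and the $\poly\log$ factor is absorbed). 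Thus the total is $n^{2-1/O(c\log^2 c)}$ as claimed, correct with high probability.

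\textbf{The main obstacle} is the bookkeeping in Step~1: one must track the constants and the $\log d = \log(c\log n)$ factor carefully enough to see that choosing $\beta$ inversely proportional to $c\log^2 c$ (rather than, say, $c\log c$ or just $c$) is exactly what is forced by the $\sqrt{d\log s}\cdot\log d$ term in the exponent, and to confirm the $s^4$ prefactor does not spoil it. This is precisely the point flagged in the paper's footnote: the logarithmic savings in the degree of Theorem~\ref{MAJpoly} over Srinivasan's construction is what keeps the monomial exponent at $O(\sqrt{c\beta}\log n\log c)$ rather than $O(\sqrt{c\beta}\log n\log c\cdot\polylog)$, and a worse degree would force $\beta = o(1/(c\log^2 c))$, breaking the truly-subquadratic bound at $c = O(1)$. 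Everything else is routine substitution.
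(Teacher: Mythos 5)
Your overall plan mirrors the paper's: combine Theorem~\ref{Hammingprobpoly} with Theorem~\ref{probpoly2Bichromatic}, set $s = n^\beta$, and choose $\beta$ as large as possible subject to the monomial bound $m \leq n^{0.17}$. The side-condition check and Step~2 are fine. However, there is a genuine gap in Step~1 that breaks the claim precisely in the most interesting regime, $c = O(1)$.

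The issue is the binomial estimate. You use $\binom{2d}{t} \leq (2d)^t = 2^{t\log(2d)}$, whose base grows with $n$: with $d = c\log n$ we have $\log(2d) = \Theta(\log c + \log\log n)$. You observe this yourself and correctly note that for $c = O(1)$ the relevant contribution is $\Theta(\log\log n)$ — but in the very next sentence you drop the $\log\log n$ term and write the exponent as $O(\sqrt{c\beta}\,\log n\,\log c)$ with no justification. If you keep the $\log\log n$, the constraint $\sqrt{c\beta}\,\log n\,(\log c + \log\log n) \leq 0.17\log n$ forces $\beta = O\!\left(1/\bigl(c(\log c + \log\log n)^2\bigr)\right)$, and for constant $c$ this yields only $n^{2-O(1/(\log\log n)^2)}$, which is $n^{2-o(1)}$ and \emph{not} truly subquadratic. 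So the theorem as stated does not follow from your calculation.

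The fix is to use the sharper Stirling form $\binom{N}{K} \leq (Ne/K)^K$ rather than $\binom{N}{K} \leq N^K$, which is what the paper does. The crucial point is that the \emph{ratio} $N/K$ is independent of $n$: with $N = 2d = 2c\log n$ and $K = \delta\log n$ (so $t = \delta\log n$, $\delta = \Theta(1/\log c)$), the $\log n$ factors cancel and one gets
\[
\binom{2c\log n}{\delta\log n} \;\leq\; \left(\frac{2ce}{\delta}\right)^{\delta\log n} \;=\; n^{\delta\log(2ce/\delta)},
\]
where $\delta\log(2ce/\delta)$ has no $\log\log n$ in it and can be made an arbitrarily small constant by tuning $u$ in $\delta = a/(\sqrt{u}\log c)$. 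This is exactly what lets $\beta = \Theta(1/(c\log^2 c))$ suffice uniformly in $n$. Your "main obstacle" paragraph correctly identifies spurious $\polylog$ factors as the danger, but the spurious factor in your own Step~1 comes from the loose binomial bound, not from the degree of the probabilistic polynomial.
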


\begin{proof} Let $d = c \log n$ in the following, with the implicit understanding that $c$ is a function of $n$. We apply the reduction of Theorem~\ref{probpoly2Bichromatic} and the probabilistic polynomial for the $\text{Hamming-dist}_k$ of Theorem~\ref{Hammingprobpoly}.

The reduction of Theorem~\ref{probpoly2Bichromatic} requires that the number of monomials in our probabilistic polynomial is at most $n^{0.17}$, while the monomial bound for $\text{Hamming-dist}_k$  from Theorem~\ref{Hammingprobpoly} is $m = O(s^2 \cdot {2d \choose e\sqrt{d \log s}})$ for some universal constant $a$, provided that $d > a^2 \log s$. Therefore our primary task is to maximize the value of $s$ such that $m \leq n^{0.17}$. This will minimize the final running time of $\tilde{O}(n^2/s^2)$. 
With hindsight, let us guess $s = n^{1/(u c \log^2 c)}$ for a constant $u$, and focus on the large binomial in the monomial estimate $m$. Then, \[{2d \choose a\sqrt{d \cdot \log s}} 
= {2c \log n \choose a\sqrt{(c \log n) \cdot (\log n)/(u c \log^2 c)}} 
= {2c \log n \choose a\sqrt{(\log^2 n)/(u \log^2 c)}} 
=  {2c \log n \choose a \log n/(\sqrt{u} \log c)}.\] 
For notational convenience, let $\delta = a/(\sqrt{u} \log c)$.
By Stirling's inequality, we have 
\[{2c \log n \choose \delta \log n}
\leq \left(\frac{2ce}{\delta}\right)^{\delta \log n}
= n^{\delta \log(\frac{2ce}{\delta})}.\] Plugging $\delta = a/(\sqrt{u} \log c)$ back into the exponent, we find 
\begin{align} \label{exponent1}
\delta \log\left(\frac{2ce}{\delta}\right) &= \frac{a \log(\frac{2ce\sqrt{u} \log c}{a})}{\sqrt{u} \log c}.\end{align} The quantity \eqref{exponent1} can be made arbitrarily small, by setting $u$ sufficiently large. In that case, the number of monomials $m \leq s^2 n^{\delta \log(\frac{2ce}{\delta})}$ can be made less than $n^{0.1}$. Finally, note that $a^2 \log s = a^2 (\log n)/(u c \log^2 c) < c \log n = d$, so \eqref{requirement} holds and the reduction of Theorem~\ref{probpoly2Bichromatic} applies. This completes the proof. 
\end{proof}

Observe that the probabilistic polynomials of degree $\sqrt{n \log (1/\eps)} \polylog{n}$ from prior work~\cite{Srinivasan13} would be insufficient for Theorem~\ref{BichromaticHD}. The extra degree increase would include an extra $\polylog{n}$ factor in expression \eqref{exponent1}, and hence no constant choice of $u$ would be sufficiently large.

Now we show how to solve {\sc Batch Hamming Nearest Neighbor} (BHNN). In the following theorem, we assume for all pairs of vectors in our instance that the maximum metric distance is at most some value $MAX$. (For the Hamming distance, $MAX \leq d$.) We reduce the batch nearest neighbor query problem to the bichromatic close pair problem:

\begin{theorem} Let $E^d$ be some $d$-dimensional domain supporting a metric space $M$. \label{BIC2NN} If the {\sc Bichromatic $M$-Metric Closest Pair} on $n$ vectors in $E^d$ can be solved in $T(n,d)$ time, then {\sc Batch $M$-Metric Nearest Neighbors} on $n$ vectors in $E^d$ can be solved in $O(n \cdot T(\sqrt{n}, d)\cdot MAX)$ time.
\end{theorem}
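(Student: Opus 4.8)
\medskip
\noindent\textbf{Proof proposal.}
The plan is to reduce {\sc Batch $M$-Metric Nearest Neighbors} to a sequence of {\sc Bichromatic $M$-Metric Closest Pair} instances, each on only $O(\sqrt n)$ points. It suffices to solve the decision version: for every query $u$, find the smallest integer $k\in\{0,1,\dots,MAX\}$ for which $u$ has a database point within distance $k$; this $k$ is $u$'s nearest-neighbor distance, and a witness will be produced as a byproduct. We process $k=0,1,\dots,MAX$ in order while maintaining the set $U$ of queries not yet answered (initially all $n$ queries); since each query is settled at exactly one value of $k$, this outer loop is responsible for the factor $MAX$. For a fixed $k$: partition the current $U$ (of size $\le n$) into $\lceil |U|/\sqrt n\rceil$ query-blocks of size $\le\sqrt n$, partition the $n$ database points into $\lceil\sqrt n\rceil$ database-blocks of size $\le\sqrt n$, and call the oracle with threshold $k$ on each of the $O(n)$ block pairs. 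Each call is on $\le 2\sqrt n$ vectors, so---assuming as we may that $T$ is nondecreasing and polynomially bounded, whence $T(2\sqrt n,d)=O(T(\sqrt n,d))$---it costs $O(T(\sqrt n,d))$; and a pair $(G,D_j)$ returns $1$ iff some query in $G$ has a database point in $D_j$ within distance $k$. Crucially, since every query in $G$ is still in $U$ and so has nearest-neighbor distance $\ge k$, this is equivalent to: some query in $G$ is settled at this very threshold, with a witness in $D_j$.

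The step I expect to be the main obstacle is turning these block-level answers into per-query answers, since a flagged pair $(G,D_j)$ does not reveal \emph{which} query in $G$ is settled. I would handle this by processing the flagged pairs of threshold $k$ one at a time: for each flagged $(G,D_j)$, first discard from $G$ any queries that earlier flagged pairs have already settled, re-query the oracle on the shrunken $G$ against $D_j$, and---if it still returns $1$---recursively bisect $G$, querying the two halves against $D_j$ and recursing only into halves the oracle still flags, until reaching a singleton $\{u\}$; that $u$ is settled with distance $k$, and an $O(\log n)$-call bisection of $\{u\}$ against $D_j$ pins down a witness $v$ (necessarily with $M(u,v)=k$). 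Because the queries inside each $G$ under consideration are all unanswered, ``the oracle flags this sub-block'' is the same as ``this sub-block contains a query settled at $k$'', so the recursion visits only fruitful nodes; and since settled queries are removed from $U$ immediately, each query is located by exactly one of these bisections. A bisection that locates $r'$ queries in a block of size $\le\sqrt n$ uses $O(r'\log n)$ oracle calls on size-$O(\sqrt n)$ instances, so the total refinement cost over all thresholds is $O(\sum_k r_k\log n)=O(n\log n)$ oracle calls (where $r_k$ is the number of queries settled at threshold $k$), plus $O(n)$ trivial re-queries per threshold for flagged pairs whose queries are already all settled.

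Assembling the bound: the grids use $O(n\cdot MAX)$ oracle calls, the refinements and re-queries use $O(n\log n + n\cdot MAX)$ more, each call costs $O(T(\sqrt n,d))$, and the bookkeeping (assembling instances, re-partitioning $U$, verifying a located pair) contributes only lower-order $\tilde{O}(n\cdot MAX\cdot\poly(d))$ terms. This gives the claimed $O(n\cdot T(\sqrt n,d)\cdot MAX)$ running time, once one notes that $n\log n\cdot T(\sqrt n,d)=O(n\cdot MAX\cdot T(\sqrt n,d))$ in the relevant regime (e.g. $MAX=d\ge\log n$; otherwise the stray $\log n$ is folded into $MAX$). The delicate points in the write-up will be the refinement accounting---confirming that ``$G$ all unanswered'' really makes oracle-flag equivalent to contains-a-settled-query, that sequential processing with eager removal yields the ``each query located once'' charging, and that the $O(r'\log n)$-calls-per-bisection bound holds---together with checking that re-partitioning the shrinking $U$ between thresholds causes no trouble.
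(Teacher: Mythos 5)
Your proposal is correct and follows the same high-level blocking-plus-threshold-sweep strategy as the paper: partition both sides into $\sqrt{n}$-sized blocks, iterate $k$ over $\{0,\dots,MAX\}$, run the bichromatic oracle on block pairs, and charge oracle calls either to grid cells ($O(n\cdot MAX)$ of them) or to query-settling events. The one real divergence is in how witnesses are extracted. The paper implicitly treats the {\sc Bichromatic Closest Pair} subroutine as a search oracle that returns an actual close pair when one exists (this is clear from its phrasing ``while we continue to find a pair $(x_i,y_j)$ with $M(x_i,y_j)\le k$, \dots remove $y_j$'' and is consistent with the construction in Theorem~\ref{probpoly2Bichromatic}, which brute-forces the flagged block pair to recover the closest pair). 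With a search oracle, each successful call directly removes a query from the per-$k$ copy $Q_k$, and the accounting is exactly $O(n\cdot MAX)$ calls with no logs. You instead assume only a decision oracle and recover witnesses by bisecting both the query block and the database block, paying $O(\log n)$ calls per settled query; this is a strictly weaker assumption on the oracle, at the price of an extra $O(n\log n)$ oracle calls. As you note, that overhead is absorbed into the $MAX$ factor whenever $MAX=\Omega(\log n)$, which holds in all of the paper's applications (there $MAX=d\ge\log n$), so the discrepancy is cosmetic in context; strictly speaking, though, your bound is $O\bigl(n\,(MAX+\log n)\,T(\sqrt n,d)\bigr)$ rather than $O(n\cdot MAX\cdot T(\sqrt n,d))$, so for the theorem as literally stated (arbitrary $MAX$) you would want to either assume a witness-returning oracle like the paper does, or just add the harmless assumption $MAX=\Omega(\log n)$. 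A second, smaller structural difference: you sweep $k$ upward with a single monotonically shrinking set $U$ of unanswered queries (each query settled exactly once), while the paper sweeps $k$ downward and reinitializes $Q_k$ for every $k$ (so a query can be removed up to $MAX$ times); both charging schemes give the same total, and your invariant that everything remaining in $U$ has nearest-neighbor distance $\ge k$ is sound and indeed makes the per-flagged-pair recursion correct.
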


\begin{proof} We give an oracle reduction similar to previous work~\cite{AbboudWY15}. Initialize an table $T$ of size $n$, with the maximum metric value $v$ in each entry. Given $n$ database vectors $D$ and $n$ query vectors $Q$, color $D$ red and $Q$ blue. Break $D$ into $\lceil n/s \rceil$ groups of size at most $s$, and do the same for the set $Q$. For each pair $(R',B') \subset (D \times Q)$ of groups, and for each $k = MAX-1,\ldots,1,0$, we initialize $D_k := D$, $Q_k := Q$, and call {\sc Bichromatic $M$-Metric Closest Pair} on $(R',B') \subset (D_k \times Q_k)$ with integer $k$. While we continue to find a pair $(x_i,y_j) \in (R' \times B')$ with $M(x_i,y_j) \leq k$, set $T[i] := k$ and remove $y_j$ from $Q_k$ and $B'$. (With a few more recursive calls, we could also find an explicit vector $y_j$ such that $M(x_i,y_j) \leq k$.) 

Now for each call that finds a close bichromatic pair, we remove a vector from $Q_k$; we do this at most $MAX$ times for each vector, so there can be at most $MAX \cdot n$ such calls. For each pair of groups, there are $MAX$ oracle calls that find no bichromatic pair. Therefore the total running time is $O((n+n^2/s^2) \cdot T(s, d)\cdot MAX)$. Setting $s=\sqrt{n}$ to balance the terms, the running time is $O(n \cdot T(\sqrt{n}, d)\cdot MAX)$. 
\end{proof}

The following is immediate from Theorem~\ref{BIC2NN} and Theorem~\ref{BichromaticHD}:

\begin{reminder}{Theorem~\ref{HammingNN}} For $n$ vectors of dimension $d = c(n) \log n$, {\sc Batch Hamming Nearest Neighbors} can be solved in $n^{2-1/O(c(n) \log^2 c(n))}$ time by a randomized algorithm, whp. 
\end{reminder}

\subsection{Some Applications}

Recall that the $\ell_1$ norm of two vectors $x$ and $y$ is $\sum_{i}|x_i - y_i|$. We can solve {\sc Batch $\ell_1$ Nearest Neighbors} on vectors with small integer entries by a simple reduction to {\sc Batch Hamming Nearest Neighbors}, (which is probably folklore):

\begin{theorem}\label{L1NN} For $n$ vectors of dimension $d = c(n) \log n$ in $\{0, 1, \ldots, m\}^d$, {\sc Batch $L_1$ Nearest Neighbors} can be solved in $n^{2-1/O(m c(n) \log^2 (m c(n)))}$ time by a randomized algorithm, whp. 
\end{theorem}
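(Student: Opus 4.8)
The plan is to reduce \textsc{Batch $L_1$ Nearest Neighbors} over $\{0,1,\ldots,m\}^d$ to \textsc{Batch Hamming Nearest Neighbors} over a Boolean cube of modestly larger dimension, and then invoke Theorem~\ref{HammingNN}. The standard ``unary''/thermometer encoding does exactly this: map each coordinate value $a \in \{0,1,\ldots,m\}$ to the Boolean string $1^a 0^{m-a}$ of length $m$. Concatenating these encodings coordinate-by-coordinate sends a vector $x \in \{0,1,\ldots,m\}^d$ to a vector $\phi(x) \in \{0,1\}^{md}$. The key property is that for any two values $a,b$, the Hamming distance between $1^a0^{m-a}$ and $1^b0^{m-b}$ is exactly $|a-b|$; summing over the $d$ coordinates gives $\mathrm{Ham}(\phi(x),\phi(y)) = \sum_i |x_i - y_i| = \|x-y\|_1$. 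So the encoding is an \emph{isometry} from $(\{0,1,\ldots,m\}^d, \ell_1)$ into $(\{0,1\}^{md}, \mathrm{Hamming})$, and in particular nearest neighbors are preserved exactly.

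The steps, in order, are: (i) apply $\phi$ to all $n$ database vectors and all $n$ query vectors, which takes $O(n \cdot md) = O(n \cdot m c(n) \log n)$ time, negligible compared to the final bound; (ii) observe the new dimension is $d' = md = m c(n) \log n = c'(n) \log n$ with $c'(n) := m \cdot c(n)$; (iii) run the algorithm of Theorem~\ref{HammingNN} on the transformed instance, obtaining running time $n^{2 - 1/O(c'(n) \log^2 c'(n))} = n^{2 - 1/O(m c(n) \log^2(m c(n)))}$; (iv) read off the answers directly, since the isometry means the Hamming nearest neighbor of $\phi(u)$ among $\{\phi(v) : v \in D\}$ corresponds exactly to the $\ell_1$ nearest neighbor of $u$ in $D$ (no rescaling or rounding is needed because the distances are equal on the nose).

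There is essentially no obstacle here — the only things to check are routine: that the encoding length is $md$ rather than something like $m^d$ (it is linear in $m$ because we use a thermometer code, not a one-hot code), and that plugging $c' = mc$ into $c' \log^2 c'$ gives the stated exponent $m c \log^2(mc)$. The one mild subtlety worth stating explicitly is that $m$ may itself be a function of $n$, so $c'(n) = m(n) c(n)$ is still just ``some function of $n$'' and Theorem~\ref{HammingNN} applies verbatim; we also implicitly need $md \le n$-ish behavior only insofar as Theorem~\ref{HammingNN}'s regime (dimension $o(\log^2 n/(\log\log n)^2)$) is where the bound is nontrivial, but the reduction itself is valid for all parameter ranges and simply inherits whatever guarantee Theorem~\ref{HammingNN} provides. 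Hence the result follows immediately.

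\begin{proof}
Encode each value $a \in \{0,1,\ldots,m\}$ by the string $u(a) := 1^a 0^{m-a} \in \{0,1\}^m$, and extend to vectors $x \in \{0,1,\ldots,m\}^d$ coordinatewise: $\phi(x) := (u(x_1),\ldots,u(x_d)) \in \{0,1\}^{md}$. For any $a,b$, the Hamming distance between $u(a)$ and $u(b)$ is exactly $|a-b|$, so for any $x,y$,
\[ \mathrm{Ham}(\phi(x),\phi(y)) = \sum_{i=1}^d |x_i - y_i| = \|x-y\|_1. \]
Thus $\phi$ is an isometric embedding of $(\{0,1,\ldots,m\}^d,\ell_1)$ into $(\{0,1\}^{md},\mathrm{Hamming})$, and in particular, for a database $D$ and query $u$, the $\ell_1$ nearest neighbor of $u$ in $D$ is exactly the $v \in D$ minimizing $\mathrm{Ham}(\phi(u),\phi(v))$.

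Given an instance of \textsc{Batch $L_1$ Nearest Neighbors} with $n$ database vectors and $n$ query vectors in $\{0,1,\ldots,m\}^d$ with $d = c(n)\log n$, apply $\phi$ to all $2n$ vectors in $O(nmd)$ time. The result is a \textsc{Batch Hamming Nearest Neighbors} instance on $n$ vectors in dimension $d' = md = (m\, c(n))\log n$, i.e.\ with parameter $c'(n) = m\, c(n)$. By Theorem~\ref{HammingNN}, this can be solved in randomized time $n^{2-1/O(c'(n)\log^2 c'(n))} = n^{2-1/O(m\, c(n)\log^2(m\, c(n)))}$ whp, and by the isometry the returned nearest neighbors are exactly the $\ell_1$ nearest neighbors. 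The preprocessing cost $O(nmd) = O(nm\,c(n)\log n)$ is dominated by this bound, completing the proof.
\end{proof}
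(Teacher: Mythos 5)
Your proof is correct and follows essentially the same route as the paper: encode each coordinate in unary (thermometer code) to reduce $\ell_1$ over $\{0,\ldots,m\}^d$ isometrically to Hamming distance over $\{0,1\}^{md}$, then invoke Theorem~\ref{HammingNN} with $c' = mc$. You spell out the isometry and parameter bookkeeping a bit more explicitly, but the argument is the same.
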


\begin{proof}
Notice that for any $x,y \in \{0, \ldots, m \}$, the Hamming distance of their unary representations, written as $m$-dimensional vectors, is equal to $|x-y|$. Hence, for $x \in \{0, \ldots, m\}^d$, we can transform it into a vector $x' \in \{ 0, 1\}^{md}$ by setting $(x'_{m(i-1)+1}, x'_{m(i-1)+2}, \ldots, x'_{m(i-1)+m})$ equal to the unary representation of $x_i$, for $1 \leq i \leq d$. It is then equivalent to solve the Hamming nearest neighbors problem on these $md$-dimensional vectors.
\end{proof}

It is also easy to extend Theorem~\ref{HammingNN} for vectors over $O(1)$-sized alphabets using equidistant binary codes~(\cite{MinKerui}, Section 5.1). This is useful for applications in biology, such as finding similar DNA sequences. The above algorithms also apply to computing maximum inner products:

\begin{theorem} \label{BichromaticMIP}
The {\sc Bichromatic Minimum Inner Product} (and {\sc Maximum}) problem with $n$ red and blue Boolean vectors in $c \log n$ dimensions can be solved in $n^{2-1/O(c \log^2 c)}$ randomized time. 
\end{theorem}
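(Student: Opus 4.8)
The plan is to reduce \textsc{Bichromatic Minimum Inner Product} to \textsc{Bichromatic Hamming Closest Pair} and then invoke Theorem~\ref{BichromaticHD}. The key observation is the standard identity relating inner product and Hamming distance for Boolean vectors: for $x,y \in \{0,1\}^d$, if $|x|$ and $|y|$ denote the number of ones in $x$ and $y$ respectively, then $\langle x, y\rangle = \tfrac{1}{2}\bigl(|x| + |y| - \mathrm{Ham}(x,y)\bigr)$, equivalently $\mathrm{Ham}(x,y) = |x| + |y| - 2\langle x,y\rangle$. The difficulty is that $|x|+|y|$ is not a constant across all pairs, so a small inner product does not correspond to a small Hamming distance uniformly. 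First I would handle this by grouping vectors according to their Hamming weight: there are only $d+1 = O(c \log n)$ possible weights for a red vector and likewise for a blue vector, so I would iterate over all $O((c\log n)^2) = n^{o(1)}$ pairs of weight classes $(w_R, w_B)$. Within a fixed pair of weight classes, minimizing $\langle x, y\rangle$ over red $x$ of weight $w_R$ and blue $y$ of weight $w_B$ is exactly equivalent to minimizing $\mathrm{Ham}(x,y) = w_R + w_B - 2\langle x,y\rangle$, i.e. to a \textsc{Bichromatic Hamming Closest Pair} instance (where I can ask, via binary search over the threshold $k$ or by using the closest-pair version directly, for the minimum Hamming distance).

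The next step is to account for the running time. Each restricted instance still has at most $n$ red and $n$ blue vectors in dimension $d = c\log n$, so by Theorem~\ref{BichromaticHD} it is solved in $n^{2-1/O(c\log^2 c)}$ randomized time, whp. Summing over the $n^{o(1)}$ weight-class pairs and over an $O(\log d) = O(\log\log n)$-factor for binary search on $k$ leaves the bound at $n^{2-1/O(c\log^2 c)}$, since $n^{o(1)}$ and $\polylog n$ factors are absorbed into the $O(\cdot)$ in the exponent (formally, $n^{2 - 1/O(c\log^2 c)} \cdot n^{o(1)} = n^{2 - 1/O(c\log^2 c)}$ after adjusting the hidden constant). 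To boost correctness to high probability across all $n^{o(1)}$ calls, I would repeat each call $O(\log n)$ times and take the majority, again only a $\polylog n$ overhead. The \textsc{Maximum} Inner Product case is symmetric: within a fixed weight-class pair, maximizing $\langle x,y\rangle$ is equivalent to \emph{minimizing} $\mathrm{Ham}(x,y)$, so the identical reduction applies (or one can complement the blue vectors, which swaps min and max inner product at the cost of shifting weights, but the weight-bucketing approach is cleaner and avoids that bookkeeping).

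I expect the main obstacle to be purely a matter of careful bookkeeping rather than a conceptual hurdle: one must verify that the weight-class restriction genuinely makes the two objectives equivalent (it does, because $w_R + w_B$ is constant on the restricted instance), and that the reduction from ``minimum'' to the decision-flavored \textsc{Bichromatic Hamming Closest Pair} oracle (which decides ``is there a pair within distance $k$?'') only costs a $\log d$ factor via binary search on $k \in \{0,1,\ldots,d\}$. One minor subtlety worth stating explicitly: the dimension $md = d$ here is unchanged, so the hypothesis $d = c\log n$ with $c$ a function of $n$ is preserved exactly, and no analogue of the $\ell_1$-to-Hamming blowup (as in Theorem~\ref{L1NN}) is incurred. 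Given all this, the proof is short.

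\begin{proof}
For Boolean vectors $x,y \in \{0,1\}^d$ we have $\mathrm{Ham}(x,y) = |x| + |y| - 2\langle x,y\rangle$. Partition the red vectors by Hamming weight into classes $R_0,\ldots,R_d$ and likewise the blue vectors into $B_0,\ldots,B_d$. For each of the $(d+1)^2 = n^{o(1)}$ pairs $(R_a, B_b)$, and each integer $k \in \{0,1,\ldots,d\}$, run \textsc{Bichromatic Hamming Closest Pair} on $R_a \cup B_b$ with threshold $k$; the smallest $k$ for which it reports ``yes'' equals the minimum Hamming distance between a vector of $R_a$ and one of $B_b$, hence (adding $a+b$ and dividing by $2$) determines the minimum inner product within that class pair. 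The overall minimum inner product is the minimum over all class pairs; the maximum inner product is obtained identically, reading off the \emph{minimum} Hamming distance and noting $\langle x,y\rangle = \tfrac12(a+b-\mathrm{Ham}(x,y))$ is maximized exactly when $\mathrm{Ham}(x,y)$ is minimized within a fixed $(a,b)$. By Theorem~\ref{BichromaticHD}, each call runs in $n^{2-1/O(c\log^2 c)}$ randomized time whp; repeating $O(\log n)$ times and taking majorities makes all $n^{o(1)} \cdot (d+1)$ invocations correct whp. The total time is $n^{o(1)} \cdot \polylog(n) \cdot n^{2-1/O(c\log^2 c)} = n^{2-1/O(c\log^2 c)}$ after adjusting the constant hidden in the exponent.
\end{proof}
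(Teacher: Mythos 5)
Your overall approach—bucketing red and blue vectors by Hamming weight and then reducing to Hamming closest pair—is essentially the same idea the paper uses (the paper carries out the weight-bucketing inside the proof of Theorem~\ref{SETH} and then cites it). However, your reduction contains a sign error that leaves a real gap in the \textsc{Minimum} Inner Product case. From $\mathrm{Ham}(x,y) = |x| + |y| - 2\langle x,y\rangle$, within a fixed weight bucket $(a,b)$ where $|x|+|y|=a+b$ is constant, \emph{decreasing} $\langle x,y\rangle$ \emph{increases} $\mathrm{Ham}(x,y)$. So minimizing the inner product is equivalent to \emph{maximizing} the Hamming distance, i.e.\ to a \textsc{Bichromatic Hamming Furthest Pair} instance, not a closest pair instance. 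Your proof reads off the \emph{minimum} Hamming distance $h^*$ in each bucket and asserts this gives the minimum inner product via $\tfrac12(a+b-h^*)$; but that expression is the \emph{maximum} inner product in that bucket (your own last sentence about the maximum case confirms this). Your binary search with the closest-pair oracle finds the smallest $k$ admitting a pair at distance $\leq k$, which is exactly the minimum Hamming distance; it does not find the maximum Hamming distance, which is what the minimum inner product case requires.

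The fix is exactly the complement trick you considered and then explicitly set aside as unnecessary bookkeeping: it is necessary. Either (i) reduce \textsc{Bichromatic Hamming Furthest Pair} to \textsc{Bichromatic Hamming Closest Pair} by flipping all bits of the blue vectors (since $\mathrm{Ham}(u,v) = d - \mathrm{Ham}(u,\bar v)$), and then run your weight-bucketed closest-pair procedure on the complemented instance to recover the maximum Hamming distance (the blue weights shift to $d-b$, but your bucketing already iterates over all weights, so this costs nothing asymptotically); or (ii) argue directly, as the paper does via Theorem~\ref{SETH}, that min inner product reduces to furthest pair, and furthest pair is equivalent to closest pair. With that correction, the rest of your argument—the $n^{o(1)}$ number of buckets, the $O(\log d)$ binary search, the $O(\log n)$ repetition for high-probability correctness, and the observation that the dimension is unchanged—is sound and matches the paper's time bound. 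The maximum inner product half of your proof is already correct as written.
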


\begin{proof} Recall that Theorem \ref{SETH} gives a reduction from {\sc Bichromatic Minimum Inner Product} to {\sc Bichromatic Hamming Furthest Pair}, \emph{and} shows that {\sc Bichromatic Hamming Furthest Pair} is equivalent to {\sc Bichromatic Hamming Closest Pair}. The same reduction shows that {\sc Bichromatic Maximum Inner Product} reduces to the closest pair version. Hence Theorem~\ref{HammingNN} applies, to both minimum and maximum inner products.
\end{proof}

As a consequence, we can answer a batch of $n$ minimum inner product queries on a database of size $n$ with the same time estimate, applying a reduction analogous to that of Theorem~\ref{BIC2NN}. From there, Theorem~\ref{BichromaticMIP} can be extended to other important similarity measures, such as finding a pair of sets $A, B$ with maximum \emph{Jaccard coefficient}, defined as $\frac{|A \cap B|}{|A \cup B|}$ ~\cite{Broder97}.

\begin{corollary}
Given $n$ red and blue sets in $\{0,1\}^{c \log n}$, we can find the pair of red and blue sets with maximum Jaccard coefficient in $n^{2-1/O(c \log^2 c)}$ randomized time. 
\end{corollary}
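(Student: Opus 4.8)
The plan is to reduce the maximum Jaccard coefficient problem to the Bichromatic Maximum Inner Product problem (Theorem~\ref{BichromaticMIP}), whose complexity is already known. First I would recall the standard identity: for sets $A, B$ viewed as vectors in $\{0,1\}^{c \log n}$, if $a = |A|$, $b = |B|$, and $t = |A \cap B| = \langle A, B \rangle$, then $|A \cup B| = a + b - t$, so the Jaccard coefficient is $\frac{t}{a+b-t}$. The key observation is that this quantity is monotone increasing in $t$ once we fix $a+b$: for fixed $s = a+b$, the function $t \mapsto t/(s-t)$ is increasing on $t \in [0, s)$. So if we group the red and blue sets by their cardinalities, then within each fixed pair of cardinalities $(a,b)$ the problem of maximizing the Jaccard coefficient is exactly the problem of maximizing the inner product $t$.

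The main steps, in order, would be: (1) Partition the $n$ red sets into at most $c \log n + 1$ classes by cardinality, and likewise the $n$ blue sets; this is because every set lives in $\{0,1\}^{c\log n}$, so cardinalities range over $\{0, 1, \ldots, c \log n\}$. (2) For each of the at most $(c\log n + 1)^2 = O(c^2 \log^2 n)$ pairs of cardinality classes $(i, j)$, run the Bichromatic Maximum Inner Product algorithm on the red sets of size $i$ and blue sets of size $j$; by Theorem~\ref{BichromaticMIP} each such call costs $n_{i,j}^{2-1/O(c\log^2 c)} \le n^{2 - 1/O(c\log^2 c)}$ time, where $n_{i,j}$ is the number of vectors in that subproblem. (3) From each call, recover the pair with maximum inner product $t^*_{i,j}$ (augmenting the decision procedure to return a witness in the standard way, as already noted elsewhere in the paper), compute its Jaccard value $t^*_{i,j}/(i + j - t^*_{i,j})$, and take the maximum over all $O(c^2 \log^2 n)$ pairs. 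Since the overall maximum Jaccard pair has some fixed pair of cardinalities and maximizes $t$ within that pair, this returns the global optimum.

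The running time is $O(c^2 \log^2 n) \cdot n^{2-1/O(c \log^2 c)}$, and the polylogarithmic factor $c^2 \log^2 n$ is absorbed into the $n^{2-1/O(c\log^2 c)}$ bound (it only changes the constant hidden in the $O(\cdot)$ in the exponent's denominator, since $\log(c^2 \log^2 n) = O(\log n)$ and we can afford to lose a $\tilde O(1)$ factor). One subtlety worth spelling out: the Bichromatic Maximum Inner Product procedure of Theorem~\ref{BichromaticMIP} is stated as a decision problem (``is there a pair with inner product at least $k$''), so to get the actual maximizing value and a witness we binary-search over $k \in \{0, \ldots, c \log n\}$ and then do the group-splitting trick analogous to Theorem~\ref{BIC2NN}; this adds only further polylogarithmic overhead.

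The step I expect to be the only real (though mild) obstacle is handling degenerate cases in the Jaccard formula—pairs where $A \cup B = \emptyset$ (both sets empty), where the coefficient is conventionally defined or excluded—and confirming that the monotonicity argument correctly handles the boundary $t = a$ or $t = b$ (i.e., one set contained in the other). These are trivially dispatched by treating empty sets separately and noting $t \le \min(a,b) < a + b$ whenever $a + b > 0$, so $t/(s-t)$ is well-defined and strictly increasing on the relevant range. Everything else is bookkeeping: the reduction is essentially immediate once the cardinality-bucketing idea is in place.
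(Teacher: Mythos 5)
Your proof is correct and takes essentially the same approach as the paper: bucket the red and blue sets by their exact cardinalities, observe that within a fixed cardinality pair the Jaccard coefficient is monotone in the intersection size, and reduce each bucket pair to a call to Bichromatic Maximum Inner Product. The paper omits the degenerate-case discussion and the decision-to-search bookkeeping, but these are minor cleanups on the same argument.
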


\begin{proof} Let $S$ be a given collection of red and blue sets over $[d]$. We construe the sets in $S$ as vectors, in the natural way. For all possible values $d_1, d_2 = 1,\ldots,d$, we will construct an instance of {\sc Bichromatic Maximum Inner Product} $S'_{d_1,d_2}$, and take the best pair found, appealing to Theorem~\ref{BichromaticMIP}.

As in the proof of Theorem~\ref{SETH}, we ``filter'' sets based on their cardinalities. In the instance $S'_{d_1,d_2}$ of {\sc Bichromatic Maximum Inner Product}, we only include red sets with cardinality exactly $d_1$, and blue sets with cardinality exactly $d_2$. For sets $R,B$, we have \begin{align}\label{jaccard}\frac{|R \cap B|}{|R \cup B|} = \frac{|R \cap B|}{d_1 + d_2 - |R \cap B|}.\end{align} Suppose that we choose a red set $R$ and blue set $B$ that maximize $|R \cap B|$. This choice simultaneously maximizes the numerator and minimizes the denominator of \eqref{jaccard}, producing the sets $R$ and $B$ with maximum Jaccard coefficient over the red sets with cardinality $d_1$ and blue sets with cardinality $d_2$. Finding the maximum pair $R$ and $B$ over each choice of $d_1, d_2$, we will find the overall $R$ and $B$ with maximum Jaccard coefficient.
\end{proof}

\subsection{Closest Pair in Hamming Space is Hard}\label{appendix-SETH}

The \emph{Strong Exponential Time Hypothesis} (SETH) states that there is no universal $\delta < 1$ such that for all $c$, CNF-SAT with $n$ variables and $cn$ clauses can be solved in $O(2^{\delta n})$ time. 

\begin{reminder}{Theorem~\ref{SETH}} Suppose there is $\eps > 0$ such that for all constant $c$, {\sc Bichromatic Hamming Closest Pair} can be solved in $2^{o(d)} \cdot n^{2-\eps}$ time on a set of $n$ points in $\{0,1\}^{c \log n}$. Then SETH is false.
\end{reminder}

\begin{proof} The proof is a reduction from the {\sc Orthogonal Vectors} problem with $n$ vectors $S \subset \{0,1\}^d$: are there $u, v \in S$ such that $\langle u,v\rangle = 0$? It is well-known that $2^{o(d)} \cdot n^{2-\eps}$ time would refute SETH~\cite{Williams05}. Indeed, we show that {\sc Bichromatic Minimum Inner Product} (finding a pair of vectors with minimum inner product, not just inner product zero) reduces to {\sc Bichromatic Hamming Closest Pair}, as well as the version for maximum inner product.

First, we observe that {\sc Bichromatic Hamming Closest Pair} is equivalent to {\sc Bichromatic Hamming Furthest Pair}: let $\overline{v}$ be the complement of $v$ (the vector obtained by flipping all the bits of $v$). Then the Hamming distance of $u$ and $v$ is $H(u,v) = d-H(u,\overline{v})$. Thus by flipping all the bits in the components of the blue vectors, we can reduce from the closest pair problem to furthest pair, and vice versa.

Now we reduce {\sc Orthogonal Vectors} to {\sc Bichromatic Hamming Furthest Pair}. Our {\sc Orthogonal Vectors} instance has red vectors $S_r$ and blue vectors $S_b$, and we wish to find $u \in S_r$ and $v \in S_b$ such that $\langle u,v\rangle = 0$. 

For every $d^2$ possible choice of $I, J = 1,\ldots,d$, construct the subset $S_{r,I}$ of vectors in $S_r$ with exactly $I$ ones,  and construct the subset $S_{b,J}$ of vectors in $S_b$ with exactly $J$ ones. We will look for an orthogonal pair among $S_{r,I}$ and $S_{b,J}$ for all such $I,J$ separately.

Recall that Hamming distance of two vectors equals the $\ell_2^2$ norm distance, in $\{0,1\}^d$. The $\ell_2^2$ norm of $u$ and $v$ is \[||u-v||_2^2 = ||u||_2 + ||v||_2 - 2 \langle u,v\rangle.\] However, in $S_{r,I}$ all vectors have the same norm, and all vectors in $S_{b,J}$ have the same norm. Therefore, finding a red-blue pair $u \in S_{r,I}$ and $v \in S_{b,J}$ with minimum inner product is equivalent to finding a pair in $S_r \times S_b$ with smallest Hamming distance. (Similarly, maximum inner product is equivalent to Hamming closest pair.)

The reduction only requires $O(d^2)$ calls to {\sc Bichromatic Hamming Furthest Pair}, with no changes to the dimension $d$ nor the number of vectors $n$.
\end{proof}

\section{Conclusion}

%We matched the old $\Omega(\sqrt{n})$ degree lower bound on probabilistic polynomials for MAJORITY, and used it to find a faster algorithm for finding a closest pair in the Hamming metric. 
There are many interesting further directions. Here are some general questions about the future of this approach for nearest neighbor problems:
\begin{itemize}
\item Could a similar approach solve the closest pair problem for edit distance in $\{0,1\}^d$? This is a natural next step. Reductions from edit distance to Hamming distance are known~\cite{Bar-Yossef04} but they yield large approximation factors; we think exact solutions should be possible. The main difficulty is that the circuit complexity (and probabilistic polynomial degree) of edit distance seems much higher than that of Hamming distance: Hamming distance can be seen as a ``threshold of XORs'', but the best complexity upper bound for edit distance seems to be ${\sf NLOGSPACE}$ 
%(see \url{http://cstheory.stackexchange.com/questions/25509/edit-distance-in-sublinear-space})
.

\item We can solve the off-line ``closest pair'' version of several data structure problems, by reducing them to problems of evaluating polynomials, and applying matrix multiplication. Is there any way to obtain better \emph{data structures} using this algebraic approach? Of course there are limitations on such data structures, there are also gaps between known data structures and known lower bounds.

\item It feels strange to embed multivariate polynomial evaluations into a matrix multiplication, when it is known that evaluating univariate polynomials on many points can be done even faster than known matrix multiplication algorithms (using FFTs). Perhaps we can apply other algebraic tools (such as Kedlaya and Umans' multivariate polynomial evaluation algorithms~\cite{Umans08,KU11}) directly to these problems.

\item Recently, Timothy Chan \cite{Chan15} gave an algorithm for computing dominances among $n$ vectors in $\R^{c \log n}$, which has a running time that is very similar to ours: $n^{2-1/O(c \log^2 c)}$ time. Is this a coincidence?

\end{itemize}

\section{Acknowledgements}

We thank an anonymous FOCS reviewer for pointing out that our probabilistic polynomial for general symmetric functions can achieve an $O(\sqrt{n \log(1/\eps)})$ degree bound as well.

\bibliographystyle{alpha}
\bibliography{papers}
%\bibliography{sigproc}

\end{document}